\title{Consistent sets of lines with no colorful incidence}
\author{Boris Bukh}{Carnegie Mellon University, Department of Mathematical Sciences\\ {Pittsburgh, PA 15213, USA.}}{}{}{Supported in part by Sloan Research Fellowship and by U.S.\ taxpayers through NSF CAREER grant DMS-1555149. Part of the work was done during the visit to Universit\'e Paris-Est Marne-la-Vall\'ee supported by LabEx B\'ezout (ANR-10-LABX-58).}
\author{Xavier Goaoc}{Université Paris-Est, LIGM\\{UMR 8049, CNRS, ENPC, ESIEE, UPEM, F-77454, Marne-la-Vallée, France.}}{}{}{Supported by the Institut Universitaire de France.}
\author{Alfredo Hubard}{Université Paris-Est, LIGM\\{UMR 8049, CNRS, ENPC, ESIEE, UPEM, F-77454, Marne-la-Vallée, France.}}{}{}{}
\author{Matthew Trager}{Inria and \'Ecole Normale Supérieure, CNRS, PSL Research University.{}}{}{}{Supported in part by the ERC grant VideoWorld and the Institut Universitaire de France}
\authorrunning{B. Bukh, X. Goaoc, A. Hubard and M. Trager}
\theoremstyle{plain}
\newtheorem{problem}[theorem]{Problem}
\newtheorem{proposition}[theorem]{Proposition}
\newcommand\R{\ensuremath{\mathbb{R}}}
\newcommand\N{\ensuremath{\mathbb{N}}}
\newcommand\Z{\ensuremath{\mathbb{Z}}}
\newcommand\F{\ensuremath{\mathcal{F}}}
\newcommand\Li{\ensuremath{\mathcal{L}}}
\newcommand\V{\ensuremath{\mathbb{V}}}
\newcommand\1{\ensuremath{\mathbb{1}}}
\renewcommand\Pr[1]{\ensuremath{\textrm{P}\left[#1\right]}}
\newcommand\Ex[1]{\ensuremath{\textrm{E}\left[#1\right]}}
\newcommand\Var[1]{\ensuremath{\textrm{Var}\left[#1\right]}}
\newcommand\Cov[1]{\ensuremath{\textrm{Cov}\left[#1\right]}}
\newcommand{\pth}[1]{\left( #1 \right)}
\newcommand*{\eqdef}{\stackrel{\text{\tiny{def}}}{=}}
\subjclass{Theory of computation $\rightarrow$ Randomness, geometry and discrete structures, Computing methodologies $\rightarrow$ Artificial intelligence $\rightarrow$ Computer vision $\rightarrow$ Computer vision tasks $\rightarrow$ Scene understanding}
\keywords{Incidence geometry, image consistency, probabilistic construction, algebraic construction, projective configuration}
\begin{document}
\maketitle

\begin{abstract}
  We consider incidences among colored sets of lines in $\R^d$ and
  examine whether the existence of certain concurrences between lines
  of $k$ colors force the existence of at least one concurrence
  between lines of $k+1$ colors. This question is relevant for 
  problems in 3D reconstruction in computer
  vision.
\end{abstract}

\section{Introduction}

A central problem in computer vision is the reconstruction of a
three-dimensional scene from multiple photographs.  Trager et
al.~\cite[Definition 1]{trager2016consistency} defined a set of images
as consistent if they represent the same scene from different points
of view. They constructed examples (like that of
Figure~\ref{f:inconsistent}) of a set of images which is pairwise
consistent while being altogether inconsistent.
\begin{figure}[ht]
  \includegraphics[page=2]{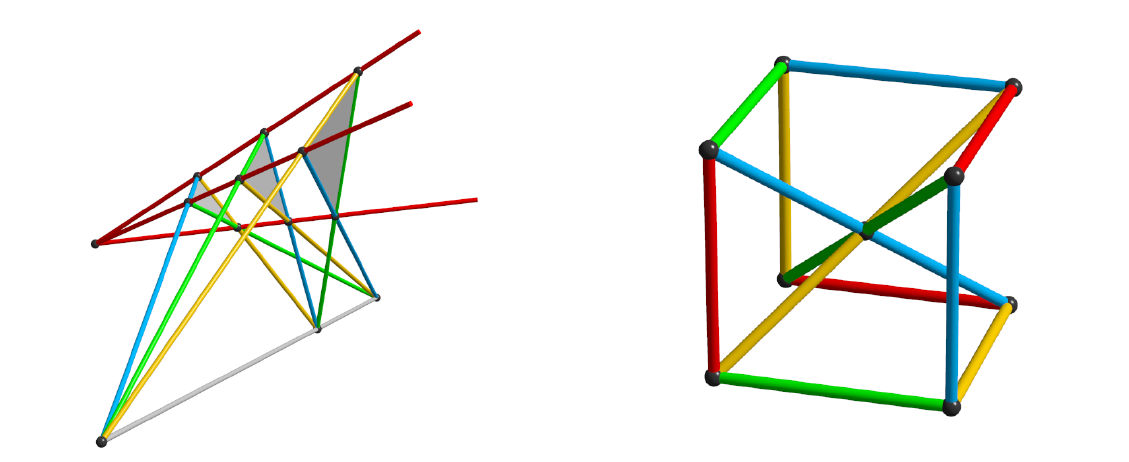}
  \caption{ Three silhouettes that are 2-consistent but not globally
    consistent for three orthogonal projections. Each of the first
    three figures shows a three-dimensional set that projects onto two
    of the three silhouettes. The fourth figure illustrates that no
    set can project simultaneously onto all three silhouettes: the
    highlighted red image point cannot be lifted in 3D, since no point
    that projects onto it belongs to the pre-images of both the blue
    and green silhouettes.\label{f:inconsistent}}
\end{figure}
They also showed~\cite[Proposition 4]{trager2016consistency} that
under a certain convexity hypothesis, images that are consistent three
at a time are globally consistent. In this paper we drop the convexity
condition and consider these affairs from the point of view of
incidence geometry.

\subparagraph{Problem statement.}

An \emph{incidence} is a set of lines that meet at a single point. Let
$\Li = \Li_1 \cup \Li_2 \cup \ldots \cup \Li_{m}$ be a set of lines of
$m$ colors in $\R^d$ (where each $\Li_i$ is a color class). Given $S
\subset \{1,2\ldots m\}$, an \emph{$S$-incidence} in $\Li$ is an
incidence between lines of every color in $S$. This paper focuses on
the following notions:

\begin{definition}
  For $1 \le k \le m$, a \emph{$k$-incidence} in $\Li$ is a
  $S$-incidence where $|S|=k$. A \emph{colorful incidence} in $\Li$ is
  an incidence that contains lines of every color.
\end{definition}

\begin{definition}
   The set $\Li$ is \emph{$k$-consistent} if for every $k$-tuple of
   colors $S\subset \{1,2\ldots m\}$, every line in $\cup_{i \in
     S}\Li_i$ belongs to an $S$-incidence. The set $\Li$ is
   \emph{consistent} if every line belongs to (at least) one colorful
   incidence.
\end{definition}

\noindent
Instead of wondering if $k$-consistency implies consistency, we aim for a more modest goal:

\begin{problem}\label{p}
  Under which conditions does the $k$-consistency assumption imply the
  existence of a $(k+1)$-incidence?
\end{problem}

\noindent
The main results of this paper are two constructions of (infinite
families of) finite sets of lines which are $k$-consistent and have no
colorful incidence. Thus, consistency does not propagate.

\begin{remark}
  Unless indicated otherwise, the set $\Li$ is assumed to be
  finite. We also assume throughout that the lines in $\Li$ are
  pairwise distinct. This has no consequence on Problem~\ref{p}:
  repeating a line in a color class is useless, and if two lines of
  distinct colors coincide, then the $k$-consistency assumption
  trivially implies that this line has a $(k+1)$-incidence.
\end{remark}

\subparagraph{Relation to photograph consistency.}

Let us explain how our initial image consistency question relates to
Problem~\ref{p}. Firstly, we ignore color or intensity information,
and treat the scene as a set of opaque objects and the images as their
projections onto certain planes. In this setting, images are
  \emph{consistent} if and only if there exists a subset $R \subset
  \R^3$ that projects into each of them. Assuming that light travels
along straight lines, the set of 3D points that are mapped to a given
image point is a ray, or more conveniently a line, in $\R^3$.
Starting with $m$ photographs, if we let $\Li_i$ denote the lines that
are pre-images of the projection on the $i$th photograph, then 
  the photographs are consistent if and only if $\cup_{i=1}^m \Li_i$
  is consistent: $R$ is the set of points of colorful incidences.

\subparagraph{Setting.}

In the basic set-up for computer vision, all lines used to project the
scene onto a given image plane pass through a ``pinhole''. We
therefore define a color class $\Li_i$ as \emph{concurrent} if it
consists of concurrent lines. We consider, however, the problem more
generally since it is possible to build other imaging systems. For
example, there are cameras that use the lines secant to two fixed skew
lines; other cameras use the lines secant to an algebraic curve
$\gamma$ and to a line intersecting $\gamma$ in $\deg \gamma -1$
points. For a discussion of the geometry of families of lines arising
in the modeling of imaging systems,
see~\cite{batog2010admissible,trager:hal-01506996} and the references
therein.

We focus in this paper on the consistency of finite sets of
lines. This restriction is technically convenient and remains relevant
to the initial motivation on continuous sets of lines. On the one
hand, our constructions for the finite problem turn out to readily
extend to infinite families of lines (see Section~\ref{s:lower}).  On
the other hand, the finite problem is already relevant to 3D
reconstruction, when one has to recover the camera parameters
(settings or position) used in the photographs. Indeed, this recovery
is typically done by identifying pixels in different images that are
likely to be the projection of the same 3D element, and using the
incidence structure of their inverse images to infer the position of
the camera; this process is called \emph{structure from
  motion}~\cite{ozyecsil2017survey}. The number of lines required to
determine the cameras is typically $5$ to $7$ per image. Although
pixels are usually matched across pairs of images, there are good
reasons for wanting to match them across more images, firstly for
robustness to noise, but also because this avoids ambiguities in the
reconstruction in the case of degenerate camera configurations (for
example, pairwise matches are never sufficient to reconstruct a scene
from images when all the camera pinholes are exactly
aligned~\cite[Chapter 15.4.2]{hartley2003multiple}).  Understanding
the consistency propagation may simplify the certification of such
matchings.

\subsection{Results}
\label{s:results}

We focus on Problem~\ref{p} for $k \ge 3$ because examples of tricolor
sets of lines that are $2$-consistent but without a colorful incidence
are relatively easy to build:

\begin{example}
  Let $(\vec{x}_0,\vec{x}_1,\vec{x}_2)$ be a basis of $\R^3$. Let $p_0,p_2\ldots,
  p_{3n}$ be a set of points where $p_0$ is arbitrary, $p_{i+1} \in
  p_i + \R \vec{x}_{(i \hspace{-0.1cm}\mod 3)}$ and $p_{3n}=p_0$. For each $i \in \{0,1,2\}$
  define $\Li_i$ to be the set of lines in coordinate direction $i$
  that are incident to points $p_j$ with $j\equiv i\pmod 3$ and $j\equiv i-1\pmod 3$. 
  If desired, we may apply a projective
  transformation that turns parallelism into concurrence.
\end{example}

\subparagraph{Constructions from higher-dimensional grids.}

We present two constructions of arbitrary large sets of lines in
$\R^d$ of $k+1$ colors that are $k$-consistent and have no colorful
incidence, for every $k \ge 3$ and $k+1 \ge d \ge 2$. Both
constructions are based on selecting subsets of lines from a regular
grid in $\R^{k+1}$. In one case, the selection is probabilistic
(Theorems~\ref{t:proba}), while in the other case it uses linear
algebra over finite vector spaces (Theorem~\ref{t:alg}). In both
constructions, every color class is concurrent. The probabilistic
argument is asymptotic and proves the existence of configurations
where every line is involved in many $k$-incidences for every choice
of $k-1$ other colors. The algebraic construction is explicit and is
minimal in the sense that removing any line breaks the
$k$-consistency.

\subparagraph{Restrictions on higher-dimensional grids.}

We then test the sharpness and potential of constructions from
higher-dimensional grids. On the one hand, we examine the number of
lines of such constructions. The algebraic selection method picks at
least $2^{k^2-k-1}$ lines of each color (we leave aside the
probabilistic selection method as its analysis is asymptotic). This
construction has the property that the lines meeting at a
$k$-incidence are not ``flat'', in the sense that they are not
contained in a $k-1$-dimensional subspace. We show, using the
polynomial method~\cite{guth2010algebraic}, that for any construction
with this property, the number of lines must be at least exponential
in $k$ (Proposition~\ref{p:flatd}). On the other hand, we examine the
possibility of designing similar constructions for models of cameras
in which the lines are not all concurrent.  We observe that when every
color class is secant to two fixed lines, lines from two color classes
cannot form a complete bipartite intersection graph
(Proposition~\ref{p:nogrid}).

\subparagraph{Small configurations.}

We also investigate small-size configurations of lines in $\R^3$ with
$4$ colors that are $3$-consistent but have no colorful incidence. The
smallest example provided by our constructions has $32$ lines per
color, which says little for applications like structure from motion,
where each color class has very few lines. Figure~\ref{f:3333} shows
two non-planar examples with $12$ lines each. We prove that they are
the only non-planar constructions with these parameters
(Theorem~\ref{t:small-i}). We also show that any configuration with
these parameters and concurrent color classes must have at least $24$
lines or be planar (Theorem~\ref{t:small-ii}).

\begin{figure}[ht]
  \begin{center}
    \includegraphics[page=1]{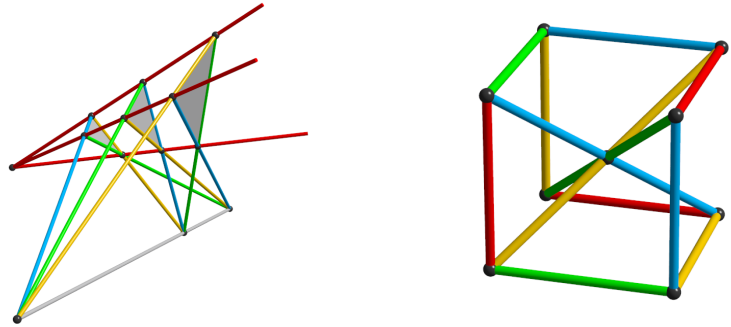}
  \end{center}
  \caption{Two non-planar examples of $12$ lines in $4$ colors that
    are $3$-consistent and have no $4$-incidence. (Left) A variation
    around Desargues' configuration. (Right) A subset of the
    $(12_416_3)$ configuration of Reye; note that triples of parallel
    lines intersect at infinity.\label{f:3333}}
\end{figure}

\subsection{Related work}

The study of consistent families of colored lines relates most
prominently to classical questions in computer vision and in discrete
geometry.

\subparagraph{In computer vision.}

The simplest and most extensively studied setting for consistency
deals with families where each color class has a single line. The
study of $n$-tuples of lines that are incident at a point (or ``point
correspondences''), is central in {\em multi-view
  geometry}~\cite{hartley2003multiple}, that is the foundation of
3D-reconstructions algorithms. In this setting, consistency propagates
trivially: $n$ lines are concurrent if and only if any three of them
are (even better: $n$ lines not all coplanar are concurrent if and
only if every pair of them is). Concurrency constraints are
traditionally expressed algebraically as polynomials in image
coordinates (see, {\em e.g.}~\cite{faugeras1995geometry}).

A more systematic study of consistency for silhouettes ({\em i.e.},
for infinite families of lines) was proposed to design reconstruction
methods based on shapes more complex than points or
lines~\cite{boyer2006using,hernandez2007silhouette}. Pairwise
consistency for silhouettes can be encoded in a ``generalized epipolar
constraint'', which can be viewed as an extension of the epipolar
constraint for points, and expresses $2$-consistency in terms of
certain simple tangency
conditions~\cite{aastrom1996generalised,trager2016consistency}. There
is no known similar characterization for $k$-consistency with
$k>2$. Consistency propagation is only known for convex silhouettes:
$3$-consistency implies consistency~\cite{trager2016consistency}.

In the dual, consistency expresses conditions for a family of planar
sets to be sections of the same 3D
object~\cite{trager2016consistency}, a question classical in geometric
tomography or stereology. We are not aware of any relevant result on
consistency in these directions.

\subparagraph{Discrete geometry.} 

As evidenced by Figure~\ref{f:3333}, our analysis of small
configurations relates to the classical configurations of Reye and
Desargues in projective geometry. Our problem and results for larger
configurations relate to various lines of research in incidence
geometry. Inspired by the Sylvester--Gallai theorem,
Erd\H{o}s~\cite{erdopurdy} asked for the largest number of collinear
$k$-tuples in a planar point set with no collinear $k+1$-tuple. The
best construction for $k=3$ come from irreducible cubic
curves\footnote{This case is closely connected with the famous
  \emph{orchard problem} recently solved in its asymptotic
  version~\cite{green2013sets}}. For higher $k$ the best construction
was given by Solymosi and Stojakov\'ic~\cite{solymosi2013many} and are
projections of higher-dimensional subsets of the regular grid
(selected, unlike ours, by taking concentric spheres). In the plane,
our problem is dual to a colorful variant of Erd\H{o}s's question.  An
intermediate between Erd\H{o}s's problem and the one treated here would
ask for the existence of a set of lines $\Li$ in which each line is
involved in many (colorless) $k$-incidences but there are no
(colorless) $k+1$-incidences. Since the Solymosi-Stojakov\'ic
construction provides $n^{2-\frac{c}{\sqrt{n}}}$ aligned $k$ tuples of
points, it is not hard to see, using a greedy deletion argument, that
this alternative problem is essentially equivalent to Erd\H{o}s's
original one.

In higher dimensions, the question of finding sets of lines with many
$k$-rich points (in the terminology of~\cite{guth2017ruled}) is
interesting even without the condition of having no $(k+1)$\nobreakdash-rich
point. Much of the recent research around this question has followed
the solution to the joint problem~\cite{guth2010algebraic} and has
been driven by algebraic considerations (see~\cite{guth2017ruled} and
the references therein). Here, we also ask for many $k$-rich points,
but our questions are driven by combinatorial considerations. Our
assumptions trade the usual density requirements (we assume linearly
many, rather than polynomially many, $k$-rich points) for structural
hypotheses in the form of conditions on the colors. On the other hand,
we can use some of the algebraic methods; the proof of
Proposition~\ref{p:flatd} is, for instance, modeled on the upper bound
on the number of joints of Guth and Katz~\cite{guth2010algebraic}.

\section{Probabilistic construction}
\label{s:proba}

In this section we prove:

\begin{theorem}\label{t:proba}
  For any $k \ge 3$, $k+1 \ge d \ge 2$, and arbitrarily large $N \in
  \N$, there exists a finite set of lines in $\R^d$ of $k+1$ colors
  that is $k$-consistent, has no $(k+1)$-incidence, and in which each
  color class consists of between $N$ and $3N$ lines, all concurrent.
\end{theorem}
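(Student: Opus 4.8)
The plan is to reduce to the top dimension $d=k+1$ and then build the configuration inside a grid in $\R^{k+1}$ by a random selection of axis-parallel lines followed by a cleanup deletion. For the reduction I would work in $\R^{k+1}$ with the $k+1$ color classes being the $k+1$ families of axis-parallel lines (color $i$ $=$ lines parallel to the $i$-th coordinate axis); within such a family all lines are parallel, hence concurrent ``at infinity''. To pass to an arbitrary $2\le d\le k+1$ I would apply a generic linear projection $\pi\colon\R^{k+1}\to\R^{d}$: it preserves every incidence already present, and by genericity no three of our finitely many lines acquire a new common point, so in particular no new $(k+1)$-incidence is created (for $d\ge 3$ skew lines even stay skew, while for $d=2$ the only new incidences are harmless crossings of pairs). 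Since $\pi$ is linear it preserves parallelism, so each color class remains a family of parallel lines, and a final projective transformation sending the hyperplane at infinity to a finite hyperplane turns each family into a genuine concurrent pencil while preserving all incidences and non-incidences. It therefore suffices to treat $d=k+1$.

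For the core construction I would take the grid $[n]^{k+1}$, include each of its axis-parallel lines independently with probability $p$, and set $n=\lceil M^{\gamma}\rceil$, $p=M/n^{k}$ with $M\asymp N$ and a fixed rational $\gamma$ strictly between $\tfrac{k}{k^{2}-1}$ and $\tfrac{k-1}{k^{2}-k-1}$. This interval is nonempty because $k(k^{2}-k-1)=k^{3}-k^{2}-k<k^{3}-k^{2}-k+1=(k-1)^{2}(k+1)$. With this choice $p\to 0$, each color class has expected size $M\asymp N$, and the two densities that matter satisfy $np^{k-1}\to\infty$ while $np^{k}\to0$; moreover the expected number of grid points carrying all $k+1$ colors is $(np)^{k+1}=o(M)$. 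Writing $\lambda:=np^{k-1}(1-p)$, a growing power of $N$, I call a grid point $q$ on a present line $\ell$ of color $i$ a \emph{clean $S$-incidence}, for $S=\{1,\dots,k+1\}\setminus\{j\}\ni i$, if the $k-1$ lines of the colors in $S\setminus\{i\}$ are present at $q$ and the color-$j$ line is absent. The clean $S$-incidences on $\ell$ lie at distinct points and involve disjoint sets of lines, so their number is a sum of independent indicators with mean $\lambda$; a Chernoff bound together with a union bound over the $O(n^{k})$ pairs $(\ell,S)$ gives that, with probability $1-n^{-\omega(1)}$, every present line has at least $\lambda/2$ clean $S$-incidences for every $S$ (this is where $\lambda\gg\log n$ is used).

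I would then delete every line passing through a grid point that carries all $k+1$ colors. This removes all $(k+1)$-incidences and, being a deletion, creates none. Since there are $o(M)$ such points, only $o(N)$ lines are deleted, so each color class retains $\Theta(N)$ surviving lines, which can be tuned into $[N,3N]$ by the choice of the constant in $M\asymp N$. It remains to check $k$-consistency, i.e.\ that \emph{every surviving line keeps a clean $S$-incidence for every $S$}. The point is that a clean $S$-incidence at $q$ is destroyed only if one of its $k-1$ partner lines is deleted, that is, passes through some $(k+1)$-colored grid point; a fixed present partner does so with probability at most $np^{k}=o(1)$. Hence each of the $\ge\lambda/2$ clean $S$-incidences of a surviving line survives the deletion with probability $1-o(1)$, so in expectation a positive proportion survive, yielding $k$-consistency and, as a bonus, many $k$-incidences per line per color set.

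The hard part is to upgrade this expectation into a tail bound: letting $Z$ be the number of surviving clean $S$-incidences of $(\ell,S)$, I need $\Pr{Z=0}=e^{-\Omega(\lambda)}$, strong enough to absorb the union bound over all $O(n^{k})$ pairs. The survival events of individual clean incidences are \emph{not} independent, since one $(k+1)$-colored point can delete the partners of several of them; this is the main obstacle. The dependence is nonetheless bounded, because every such point lies on only $k+1$ lines, so I would control it either through a correlation inequality of Janson/Suen type or by extracting a sub-family of clean incidences of size $\Omega(\lambda)$ whose partner-neighborhoods are pairwise disjoint, making their survival events genuinely independent. The parameter window $np^{k}\to0$ and $np^{k-1}\to\infty$ is engineered precisely so that the per-incidence survival probability tends to $1$ while the number of candidate incidences grows polynomially, which is exactly what this final concentration step requires.
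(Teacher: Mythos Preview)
Your construction is exactly the paper's: random axis-parallel lines in the grid $[n]^{k+1}$, delete every line through a full $(k+1)$-colored point, then projectively transform and generically project to $\R^d$. Your parameter window $np^{k-1}\to\infty$, $np^k\to 0$ is also the paper's regime (their specific choice $p=2n^{-2/(2k-1)}$ lies strictly inside your interval for $\gamma$).

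The place where your argument is incomplete---and you correctly flag it as ``the hard part''---is the tail bound for survival of clean $S$-incidences after deletion. Your proposed fix of extracting a sub-family with pairwise disjoint partner-neighbourhoods does not work as stated: for two clean incidences at $q,q'$ on $\ell$ (colour $i$), the partner $\ell_{m,q}$ has among its neighbours the colour-$i$ lines $\{(c_1,\dots,\widehat{c_i},\dots,y,\dots,c_{k+1}):y\in[n]\}$ with $y$ in position $m$, and this set is \emph{independent of $q$}. So the neighbourhoods of $q$ and $q'$ always share these colour-$i$ lines, and you cannot separate them. The paper's solution is precisely to remove this obstruction by \emph{conditioning on $\Li'_i$}, the stage-1 choices in the colour of $\ell$: one checks (and this is the paper's ``crucial observation'') that colour-$i$ flips are the only flips shared between the survival events at different points of $\ell$, so after conditioning on $\Li'_i=L$ those events become genuinely independent and a straight product bound gives $\Pr{\ell\text{ bad}\mid\Li'_i=L}\le e^{-c\,np^{k-1}}$ for ``unbiased'' $L$, followed by a union bound and a Chernoff bound on the probability that $L$ is biased. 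This replaces both your Janson idea and the disjoint-sub-family idea.
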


\noindent
We describe our construction in $\R^{k+1}$ with color classes
consisting of parallel lines. We then apply an adequate projective
transform (to turn parallelism into concurrence) and a generic
projection to a $d$-dimensional space; both transformations preserve
incidences and therefore the properties of the
construction.

\subparagraph{Construction.}

Consider the finite subset
$[n]^{k+1}=\{1,2,...,n\}^{k+1} \subset \R^{k+1}$ of the integer grid. We make our
construction in two stages:
\begin{itemize}
\item Consider the set $\Li_i^{\#}$ of $n^k$ lines that are parallel
  to the $i$th coordinate axis and contain at least one point of our
  grid. We pick a random subset $\Li_i'$, where each line from
  $\Li_i^{\#}$ is chosen to be in $\Li_i'$ independently with
  probability $p\eqdef 2n^{-\frac2{2k-1}}$ (the value of $p$ is chosen
  with foresight).
\item We then delete from $\Li_i'$ all lines that are concurrent with
  $k$ other lines from $\cup_{j \neq i} \Li_j'$ and denote the
  resulting set $\Li_i$.
\end{itemize}
We let $\Li$ denote the colored set of lines $\Li = \Li_1 \cup \Li_2
\cup \ldots \cup \Li_{k+1}$. The second stage of the construction
ensures that $\Li$ has no $(k+1)$-incidence.\footnote{Deleting one
  line per concurrence of size $k+1$ would suffice, but deleting all
  lines as we do simplifies the analysis and suffices for our
  purpose.}  To prove Theorem~\ref{t:proba}, it thus suffices to show
that with positive probability, $\Li$ is $k$-consistent and each
$\Li_i$ has the announced size. Let us clarify that all lines
considered in the proof are in $\cup_{i=1}^{k+1} \Li_i^{\#}$ unless
stated otherwise.

\subparagraph{Consistency.}

Let us argue that $\Li$ is $k$-consistent with high probability. For a
set $I\subset [k+1]$, let
\begin{align*}
  S_I&\eqdef \{ Q \in [n]^{k+1} : \forall i \in I \text{ there is a line of }\Li_i\text{ containing }Q\},\\
  S_I'&\eqdef \{ Q \in [n]^{k+1} : \forall i \in I \text{ there is a line of }\Li_i'\text{ containing }Q\}.
\end{align*}
We say that $\ell\in \Li_i^{\#}$ is \emph{$j$-bad} (for $j\neq i$) if
$\ell$ contains no point of $S_{[k+1]\setminus \{i,j\}}$. Note that
$\Li$ is not $k$-consistent precisely when some $\ell\in \Li_i^{\#}$
is $j$-bad and $\ell$ ends up in~$\Li_i$.

Let $\ell\in \Li_i^{\#}$ and let $L\subset \Li_i^{\#}$ be any set
containing $\ell$. Let $j\neq i$.  We shall estimate $ \Pr{(\ell \in
  \Li_i) \wedge (\ell\text{ is $j$-bad}) \mid \Li'_i=L}$.  For ease of
notation, we may assume that $i=k+1$, $j=k$ and $\ell$ is the line
$\{(1,1,\dotsc,1,x) : x\in \R\}$.  Call a point $Q\in [n]^{k+1}$
\emph{regular} if $Q\notin \ell$.

The randomness in the construction comes from $(k+1)n^k$ random
choices, one for each line in $\Li_1^{\#}\cup\dotsb\cup
\Li_{k+1}^{\#}$.  We refer to these random choices as `coin flips'
since we can think of each as being a result of a toss of a (biased)
coin.

Let $\ell_{r,x}$ denote the line $\{(1,1,\dotsc,1,y,1,\dotsc,1,x):
y\in \R\}$, where $y$ is at position~$r$.  If a line $\ell' \notin
\Li_{r}^{\#}$ intersects $\ell_{r,x}$ in point
$(1,1,\dotsc,1,y,1,\dotsc,1,x)$, then all points of $\ell'$ have $y$
in the $r$th position. Note that a point
$(1,1,\dotsc,1,y,1,\dotsc,1,x)$ is regular if $y\neq 1$. A crucial
observation is that if a line $\ell'\not\in \Li_{k+1}^{\#}$ intersects
$\ell_{r,x}$ in a regular point and a line $\ell''\not\in
\Li_{k+1}^{\#}$ intersects $\ell_{r',x'}$ in a regular point and
$(r,x)\neq (r',x')$, then $\ell'$ is different from $\ell''$. This
implies that sets of coin flips on which the events of the form
\begin{center}
 ``there is a regular $Q\in \ell_{r,x}\ Q\in S'_{[k]\setminus \{r\}}$''
\end{center}
are disjoint for distinct $(r,x)$, apart from the flips associated to the lines in $\Li_{k+1}^{\#}$.

For a point $Q\in [n]^{k+1}$, let $\lambda(Q)$ be the line in $\Li_{k+1}^{\#}$ containing~$Q$.
Hence,
\begin{align*}
  &\Pr{(\ell \in \Li_{k+1}') \wedge (\ell\text{ is $k$-bad}) \mid \Li'_{k+1}=L}\\
  &=\Pr{(\ell \in \Li_{k+1}') \wedge \bigwedge_{x\in [n]} (1,1,\dotsc,1,x)\notin S_{[k-1]} \mid \Li'_{k+1}=L}\\
  &=\Pr{(\ell \in \Li_{k+1}') \wedge \bigwedge_{x\in [n]} \left(\exists r\in [k-1]\ \ell_{r,x}\notin \Li_r\right) \mid \Li'_{k+1}=L}\\
  &=\Pr{(\ell \in \Li_{k+1}') \wedge \bigwedge_{x\in [n]} \left(\exists r\in [k-1]\ \ell_{r,x}\notin \Li_r'\vee (\exists Q\in \ell_{r,x} \cap S'_{[k+1]})\right) \mid \Li'_{k+1}=L}\\
\intertext{In this last formula, the point $Q$ can be assumed to be regular because $\ell\in L$, by assumption. Now we may drop $\ell\in \Li'_{k+1}$ to obtain that the above is }
&\leq \Pr{\bigwedge_{x\in [n]} \left(\exists r\in [k-1]\ \ell_{r,x}\notin \Li_r'\vee(\exists \text{\,reg. }Q\in \ell_{r,x} \cap S'_{[k+1]})\right) \mid \Li'_{k+1}=L}\\
\intertext{Observe that if $\ell_{r,x}\in \Li_r'$ then $Q\in \ell_{r,x} \cap S'_{[k+1]}$ holds if and only if $Q\in \ell_{r,x} \cap S'_{[k]\setminus \{r\}}$ and $\lambda(Q)\in L$. By the observation above, the set of coin flips on which these latter events depend for different $x$ are disjoint, so this probability is}
&=\prod_{x\in [n]}\Pr{\exists r\in [k-1]\ \ell_{r,x}\notin \Li_r'\vee(\exists \text{\,reg. }Q\in \ell_{r,x} \cap S'_{[k]\setminus \{r\}} \wedge \lambda(Q)\in L) \mid \Li'_{k+1}=L}\\
&=\prod_{x\in [n]}\left(1-\Pr{\forall r\in [k-1]\ \ell_{r,x}\in \Li_r'\wedge(\forall \text{\,reg. }Q\in \ell_{r,x}\setminus S'_{[k]\setminus \{r\}} \vee \lambda(Q)\notin L) \mid \Li'_{k+1}=L}\right)\\
&=\prod_{x\in [n]}\left(1-\prod_{r\in [k-1]}\left(p \cdot \prod_{\substack{\text{regular } Q\in \ell_{r,x}\\\lambda(Q)\in L}}\Pr{Q\notin S'_{[k]\setminus \{r\}}}\right)\right)\\
\intertext{Call $L\subset \Li_{k+1}^{\#}$ \emph{unbiased} if for every pair $(r,x)\in [k-1]\times [n]$ the number of points $Q\in \ell_{r,x}$ such that $\lambda(Q)\in L$ is at most
$2pn$. For unbiased $L$, we obtain that the above is}
& \leq \left(1-\left(p \cdot (1-p^{k-1})^{2pn}\right)^{k-1}\right)^n
\leq \left(1-\left(\tfrac{1}{2}p \right)^{k-1}\right)^n\leq e^{-n \pth{\tfrac{1}{2}p}^{k-1}} = e^{-n^{\frac{1}{2k-1}}}\\
\end{align*}
If we pick $L$ uniformly at random, then, for every
$(r,x)\in[k-1]\times[n]$, the number of points $Q\in \ell_{r,x}$ such
that $\lambda(Q)\in L$ is a binomial random variable. With help from
Chernoff's bound, we then obtain
\begin{align*}
  & \Pr{(\ell \in \Li_{k+1}') \wedge (\ell\text{ is  $k$-bad})}\\
  &\leq \Pr{\Li_{k+1}\text{ is biased}} + \sum_{\text{unbiased }L} \Pr{\Li'_{k+1}=L} \Pr{(\ell \in \Li_{k+1}')
    \wedge (\ell\text{ is $k$-bad}) \mid \Li'_{k+1}=L}\\
  &\leq \sum_{(r,x)\in [k-1]\times[n]} e^{-(pn)^2/2n}+e^{-n^{\frac{1}{2k-1}}}=e^{-cn^{\frac{1}{2k-1}}}.
\end{align*}
By taking the union bound over all $i$, $j$ and $\ell$ we obtain that
\begin{align*}
  \Pr{\Li\text{ is not }k\text{-consistent}}&\leq\Pr{\exists i,j\ \exists \ell\in \Li_i^{\#}\ \bigl((\ell \in \Li_{i}') \wedge (\ell\text{ is $j$-bad})\bigr)}\\&\leq (k+1)^2n^ke^{-cn^{\frac{1}{2k-1}}}\leq e^{-c'n^{\frac{1}{2k-1}}}.
\end{align*}

\subparagraph{Size.}

We now analyze the probability that $\Li_1$ is large (the bound will
hold for each $\Li_i$). Let us write $\Li' = \cup_{i=1}^{k+1} \Li_i'$
and label $\ell_1, \ell_2, \ldots, \ell_{n^k}$ the lines parallel to
the $1$st coordinate axis that intersect our grid. Put $X_i =
\1_{\ell_i \in \F}$ and let $X = |\Li_1| = X_1+X_2+\ldots +
X_{n^k}$. We have
\[ \Ex{X_i} = \Pr{X_i = 1} = \Pr{\ell_i \in \Li'} \Pr{\ell_i \in \Li \mid \ell_i \in \Li'}  = p(1-p^k)^n,\]
so
\[ \Ex{X} =n^k p (1-p^k)^n = \pth{1-n^{-\frac{2k}{2k-1}}}^n n^{k-\frac2{2k-1}} \ge \pth{1-\frac1n}^n  n^{k-\frac2{2k-1}} \ge \frac14 n^{k-\frac2{2k-1}}.\]
Thus $\Ex{X}=N\in [\frac14 n^{k-\frac2{2k-1}},
  n^{k-\frac2{2k-1}}]$. We next use a concentration inequality to pass
from $\Ex{X}$ to an estimate on the probability that $X$ is large. The
second step introduces some dependency between some of the variables
$X_i$, so we use Chebychev's inequality:
\[ \Pr{|X-\Ex{X}|> \lambda \Ex{X}} \le \frac{\Var{X}}{\lambda^2 \Ex{X}^2}.\]
Taking $\lambda =1/2$, we get
\[  \Pr{|X-\Ex{X}|> \frac12 \Ex{X}} \le  \frac{\Var{X}}{N^2} \le 64 \Var{X}n^{-\pth{2k-\frac4{2k-1}}}.\]
Recall that
\[ \Var{X} = \sum_{i=1}^{n^k} \Var{X_i} + \sum_{1 \le i < j \le n^k} \Cov{X_i,X_j}.\]
Since $X_i$ takes values in $\{0,1\}$, the first sum in the right-hand
term is bounded by $n^k$. Moreover, there are $O(n^{k+1})$ pairs of
variables $X_i$ and $X_j$ with non-zero covariance, since this
requires the two lines $\ell_i$ and $\ell_j$ to belong to a common
axis-aligned $2$-plane. Again, each non-zero covariance is at most
$1$. Altogether, $\Var{X} = O(n^{k+1})$, so
\[ \Pr{|X-\Ex{X}|> \frac12 \Ex{X}} = O\pth{n^{\frac{2k+3}{2k-1}-k}}.\]
For $k \ge 3$, the probability that $X$ is in $\left[\frac18
  n^{k-\frac2{2k-1}}, \frac32 n^{k-\frac2{2k-1}}\right]$ goes to $1$ as $n$
goes to infinity.

\section{Algebraic construction}\label{s:alg}

In this section we prove:

\begin{theorem}\label{t:alg}
  For any $k \ge 3$, $k+1 \ge d \ge 2$, and arbitrarily large $N$,
  there exists a finite set of lines in $\R^d$ with $k+1$ colors that
  is $k$-consistent, has no $(k+1)$-incidence, and in which each color
  class consists of $N$ lines, all concurrent.
\end{theorem}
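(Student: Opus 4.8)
The plan is to mimic the probabilistic construction: work in $\R^{k+1}$ with $k+1$ color classes of axis-parallel lines, and then transfer to $\R^d$ with concurrent classes by the same projective transform and generic projection used for Theorem~\ref{t:proba}. The novelty is to replace the random choice of lines by a linear-algebraic one. I take the grid $[n]^{k+1}$ with $n\eqdef 2^{k-1}$ and identify it with $(\mathbb{F}_2^{k-1})^{k+1}$, so that a line of color $i$ is recorded by the tuple $(x_j)_{j\ne i}\in(\mathbb{F}_2^{k-1})^k$ of its frozen coordinates. Fixing vectors $a_{ij}\in\mathbb{F}_2^{k-1}$ for $i\ne j$ and scalars $c_i\in\mathbb{F}_2$, I declare the line of color $i$ with frozen coordinates $(x_j)_{j\ne i}$ to be selected exactly when $\sum_{j\ne i}\langle a_{ij},x_j\rangle=c_i$. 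Each color class is then an affine hyperplane of $(\mathbb{F}_2^{k-1})^k$, hence has exactly $2^{k^2-k-1}$ lines, matching the count announced in Section~\ref{s:results}.

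The key step is to translate the two required properties into a single condition on the $a_{ij}$. A point $Q=(q_1,\dots,q_{k+1})$ carries a line of color $i$ iff $\sum_{j\ne i}\langle a_{ij},q_j\rangle=c_i$, so a colorful incidence at $Q$ is exactly a solution of the linear system $\Phi(Q)=c$, where $\Phi(Q)_i\eqdef\sum_{j\ne i}\langle a_{ij},q_j\rangle$. For $k$-consistency, fix a selected line $\ell$ of color $i$ and a missing color $m$; color $i$ is present along all of $\ell$, and the point of $\ell$ we seek must make the remaining colors $t\in[k+1]\setminus\{i,m\}$ present, which in the single free coordinate $q_i$ reads $\langle a_{ti},q_i\rangle=b_t$ for suitable $b_t$. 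I therefore impose that, for every slot $s$, the $k$ vectors $\{a_{is}:i\ne s\}\subset\mathbb{F}_2^{k-1}$ are in \emph{general position}, i.e.\ every $k-1$ of them are linearly independent. This makes each consistency system solvable, giving $k$-consistency. Crucially, the same hypothesis forces the unique dependency among the $k$ vectors of each slot to have full support, i.e.\ $\sum_{i\ne s}a_{is}=0$ for every $s$; summing the rows of $\Phi$ accordingly yields $\1^{\top}\Phi=0$. Hence $\Phi(Q)=c$ can have a solution only if $\sum_i c_i=0$, and choosing $c$ with $\sum_i c_i=1$ rules out every colorful incidence, i.e.\ every $(k+1)$-incidence.

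It remains to exhibit vectors in general position and to tune the number of lines. For each slot I take $\{a_{is}\}_{i\ne s}$ to be the standard basis $e_1,\dots,e_{k-1}$ together with the all-ones vector; every $k-1$ of these span $\mathbb{F}_2^{k-1}$, and their only dependency has full support, as required, and these choices are independent across slots. To realize exactly $N$ lines per color for arbitrarily large $N$, I place $t$ translated copies of the construction along the main diagonal, the $r$th copy using coordinate values in the block $[(r-1)n+1,\,rn]$: because $k+1\ge 4$, two lines from different copies share no common point, so the union remains $k$-consistent, free of colorful incidences, and has exactly $t\,2^{k^2-k-1}$ lines of each (still parallel, hence after transform concurrent) color. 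The main obstacle is getting the linear-algebraic dictionary exactly right — in particular, seeing that a single general-position requirement simultaneously makes all per-line consistency systems solvable and produces the global all-ones dependency $\1^{\top}\Phi=0$ that certifies the absence of a colorful incidence; the remaining bookkeeping (the exact count via disjoint copies, and the passage to $\R^d$ with concurrent classes) is routine.
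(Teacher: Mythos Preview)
Your proof is correct and follows essentially the same approach as the paper: select axis-parallel lines from a grid $[n]^{k+1}$ with $n$ parametrized by $(\mathbb{F}_p)^{k-1}$, using one affine-linear equation per color, so that the sum of all equations is identically zero on the left but nonzero on the right; then transfer to $\R^d$ by projective transform plus generic projection. The paper works over a general prime $p$ and chooses a single family $v_1,\dots,v_k$ with $\sum v_i=0$ and every proper subset independent, assigning these vectors to the coordinates in a fixed staircase pattern; you instead fix $p=2$ and allow arbitrary per-slot vectors in general position, exploiting the fact that over $\mathbb{F}_2$ a full-support dependency is automatically the all-ones relation $\sum_{i\ne s}a_{is}=0$, and then obtain arbitrarily large $N$ by stacking $t$ disjoint translated copies rather than by varying $p$. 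These are cosmetic variations on the same mechanism; both yield $2^{k^2-k-1}$ (or $p^{k^2-k-1}$) lines per color in the basic block.
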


\noindent
As in Section~\ref{s:proba}, we describe our construction in
$\R^{k+1}$ with parallel families of lines, and obtain the desired
configuration by an adequate projective transformation and a
projection. We again consider the finite portion of the integer grid
$[n]^{k+1} \subset \R^{k+1}$ and the axis-aligned lines that
intersects it. Unlike in Section~\ref{s:proba}, we give an explicit
way to select some of these lines to achieve the desired
configuration.

\begin{figure}[ht]
  \begin{center}
    \includegraphics[page=3]{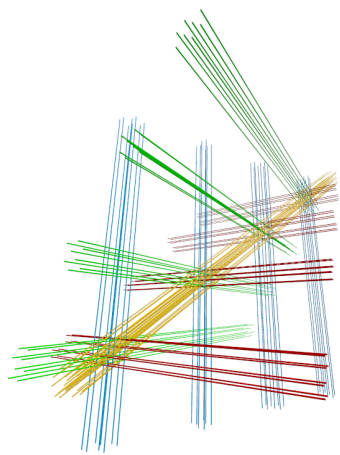}
  \end{center}
  \caption{A projection to $\R^3$ of our construction for $k=3$ and $p=2$ (reprojected to the plane). \label{f:algebraic}}
\end{figure}

\subparagraph{Construction.}

We work with axis-aligned lines that intersect in points of our
grid. Hence, identifying each line with the subset of the grid that it
contains does not affect incidences. We fix a prime number $p$ and
parameterize $[n]$ by the vector space $\V = \pth{\Z/ p \Z}^{k-1}$;
this restricts the choice of $n$ to certain prime powers, but still
allows to make it arbitrarily large. We use this parametrization to
describe the lines in our configuration as solutions of well-chosen
linear equations.

Let $v_1, v_2, \ldots, v_k \in \V$ such that $v_1+v_2+ \ldots + v_k =
0$ and any proper subset of them are linearly independent. Let $\cdot$
denote the inner product of the vector space $\V$. For $i = 1 \ldots
k$, our set $\Li_i$ consist of all the lines parallel to the $i$th
coordinates and passing through a point with parameters $(X_1, \ldots,
X_{k+1}) \in \V^{k+1}$ such that
\begin{equation}\label{e:alg1}
  v_{i-1} \cdot X_1 +  v_{i-1} \cdot X_2 + \ldots +  v_{i-1} \cdot X_{i-1} + v_i\cdot X_{i+1} + \ldots +  v_i\cdot X_{k+1} = 0.\end{equation}
(Keep in mind that each $X_i$ is a vector in $\pth{\Z/ p \Z}^{k-1}$.) We define $\Li_{k+1}$ similarly but replace Equation~\eqref{e:alg1} by
\begin{equation}\label{e:alg2}
  v_k \cdot X_1 + v_k \cdot X_2 + \ldots + v_k \cdot X_k =1.
\end{equation}

\subparagraph{No $(k+1)$-incidence.}

Any $(k+1)$-incidence is a point of the grid whose parameters $(X_1,
\ldots, X_{k+1})$ satisfy the system:
\[
\left\{
\begin{array}{lllllllllllll}
& & v_1 \cdot X_2 & + & v_1 \cdot X_3 & + & \ldots & + & v_1 \cdot X_k & + & v_1 \cdot X_{k+1} & = 0\\
v_1 \cdot X_1 & & & + & v_2 \cdot X_3 & + & \ldots & + & v_2 \cdot X_k & + & v_2 \cdot X_{k+1} & = 0\\
v_2 \cdot X_1 & + & v_2 \cdot X_2 & & & + & \ldots & + & v_3 \cdot X_k & + & v_3 \cdot X_{k+1} & = 0\\
v_3 \cdot X_1 & + & v_3 \cdot X_2 & + &  v_3 \cdot X_3 & + & \ldots & + & v_4 \cdot X_k & + & v_4 \cdot X_{k+1} & = 0\\
\ldots &&&&&&&&&&&&\\
v_{k-1} \cdot X_1 & + & v_{k-1} \cdot X_2 & + &  v_{k-1} \cdot X_3 & + & \ldots & &  & + & v_k \cdot X_{k+1} & = 0\\
v_{k} \cdot X_1 & + & v_{k} \cdot X_2 & + &  v_{k} \cdot X_3 & + & \ldots & + & v_k \cdot X_k  & & & = 1\\
\end{array}
\right.\]
Summing all these conditions yields
\[ \pth{\sum_{i=1}^k v_i} \cdot \pth{\sum_{i=1}^{k+1} X_i} = 1,\]
which contradicts $v_1+v_2+ \ldots + v_k = 0$. So there is no $(k+1)$-incidence.

\subparagraph{$k$-consistency.}

Fix a line $\ell \in \Li_1$. It corresponds to some solution
$(X_2^*,\ldots, X_{k+1}^*)$ of Equation~\eqref{e:alg1}. The grid
points on $\ell$ are precisely the points of the form $(X_1, X_2^*,
X_3^*, \ldots, X_{k+1}^*)$ and are parametrized by $X_1$. Each
equation in the system above reduces to $v_j \cdot X_1 = c_j$, where
$c_j$ is some constant vector (computed from the $v_j$'s and the
$X_j^*$'s). Since $X_1 \in \pth{\Z/ p \Z}^{k-1}$ and any $k-1$ of the
$v_j$ are linearly independent, any choice of $k-1$ equations has a
solution. This means that for any $i$, the line $\ell$ is concurrent
with lines from all $\Li_j$ with $j \in [k+1] \setminus \{i\}$. The
same goes with the lines of $\Li_2, \ldots, \Li_{k+1}$ so the
configuration is consistent.

\subparagraph{Size.}

In this construction, the size of $\Li_i$ is the number of $(X_1,
\ldots, X_{i-1},X_{i+1}, \ldots, X_{k+1})$ in $\V^{k}$ satisfying
Equation~\eqref{e:alg1} -- or~\eqref{e:alg2} if $i=k+1$. Hence
$|\Li_i| = p^{k^2-k-1}$ for every $i$. The smallest 
configuration built in this way thus has $2^{k^2-k-1}$ lines per set
(which is $32$ for $k=3$); refer to Figure~\ref{f:algebraic}.

\section{More on grid-like examples} 
\label{s:lower}

Both Theorems~\ref{t:proba} and~\ref{t:alg} construct examples as
projections of subsets of a regular grid in higher dimension. We
discuss here the properties of such constructions.

\subparagraph{Number of lines.} 

Consider a colored set of lines $\Li = \Li_1 \cup \Li_2 \cup \ldots
\cup \Li_{k+1}$ in $\R^d$. We say that a $t$-incidence of $\Li$ is
\emph{flat} if the lines meeting there are contained in an affine
subspace of dimension at most $\min(d,t)-1$. In any grid-like
construction such as those in Theorems~\ref{t:proba} and~\ref{t:alg}, every
$k$-incidence is non-flat.

\begin{proposition}\label{p:flatd}
  Let $\Li = \Li_1 \cup \Li_2 \cup \ldots \cup \Li_{m}$ be a
  $k$-consistent colored set of lines in $\R^k$ with no
  $(k+1)$-incidence. If no $k$-incidence of $\Li$ is flat, then
  \[ \sum_{i=1}^m |\Li_i| \ge \frac{\binom{\binom{m-1}{k-1} +k-1}{k}}{\binom{m-1}{k-1}}.\]
\end{proposition}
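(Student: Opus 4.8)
The plan is to follow the polynomial method, mimicking the joints theorem of Guth and Katz. The first step is a purely combinatorial observation: in $\R^k$, a non-flat $k$-incidence is exactly a \emph{joint}, a point through which pass $k$ lines spanning $\R^k$. I claim that every line $\ell$ of $\Li$ passes through at least $r \eqdef \binom{m-1}{k-1}$ distinct joints. Indeed, say $\ell \in \Li_i$; for each of the $\binom{m-1}{k-1}$ subsets $S$ of size $k$ with $i\in S$, the $k$-consistency hypothesis produces an $S$-incidence $Q_S$ on $\ell$, which is non-flat by assumption and hence a joint. Two such points $Q_S,Q_{S'}$ with $S\neq S'$ must be distinct, for otherwise lines of the at least $k+1$ colors in $S\cup S'$ would meet at $Q_S$, a forbidden $(k+1)$-incidence. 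This produces the $r$ joints on $\ell$.

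Next I set up the vanishing-polynomial count. Write $L = \sum_i |\Li_i|$ and $D = r-1$. The space of real polynomials in $k$ variables of degree at most $D$ has dimension $\binom{D+k}{k} = \binom{r+k-1}{k}$. Requiring a polynomial to vanish on a single line imposes at most $D+1 = r$ linear conditions, since its restriction to the line is univariate of degree at most $D$. Hence, if $Lr < \binom{r+k-1}{k}$, a dimension count yields a nonzero polynomial $P$ of degree at most $D$ vanishing on every line of $\Li$. The goal is to show this is impossible, which forces $Lr \ge \binom{r+k-1}{k}$ and thus the claimed bound $L \ge \binom{r+k-1}{k}/r$.

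For the contradiction I would take such a $P$ of minimal degree $D' \le D$; since no nonzero constant vanishes on a line, $D' \ge 1$, so some partial derivative $\partial_j P$ is nonzero and of degree $D'-1 < D'$. At every joint $Q$ the polynomial $P$ vanishes along the $k$ lines through $Q$, whose directions span $\R^k$; taking directional derivatives then shows $\nabla P(Q)=0$, so $\partial_j P$ vanishes at every joint. Restricted to any line $\ell$, the univariate polynomial $\partial_j P|_\ell$ has degree at most $D'-1$ yet at least $r \ge D'+1$ roots (the joints on $\ell$), hence vanishes identically. Thus $\partial_j P$ is a nonzero polynomial of smaller degree vanishing on all lines, contradicting minimality. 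I expect the step requiring the most care to be precisely this gradient argument: verifying that vanishing of $P$ on the \emph{entire} line, not merely at $Q$, lets one read off the vanishing of each directional derivative at the joint, and that the non-flatness (span) hypothesis then annihilates the full gradient. The combinatorial counting of joints per line and the passage from "few lines" to "a vanishing polynomial" are routine by comparison.
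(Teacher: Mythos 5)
Your proof is correct and follows essentially the same route as the paper's: both identify each non-flat $k$-incidence as a joint, use $k$-consistency together with the absence of $(k+1)$-incidences to place $\binom{m-1}{k-1}$ distinct joints on every line, and then run the Guth--Katz minimal-degree, vanishing-gradient argument. The only cosmetic difference is that you interpolate a low-degree polynomial vanishing on the lines directly, whereas the paper interpolates through a minimal set of witness points $P$ and splits the count into the two inequalities $\binom{m-1}{k-1}\le\deg f$ and $\binom{\deg f+k-1}{k}\le|P|\le\bigl(\sum_i|\Li_i|\bigr)\binom{m-1}{k-1}$; both packagings yield the identical bound.
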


\begin{proof}
  The proof essentially follows the argument of Guth and
  Katz~\cite{guth2010algebraic} for bounding the number of joint among
  $n$ lines; the main difference is that the consistency assumption
  makes their initial pruning step unnecessary. We spell it out for
  completeness.

  Let $P$ denote a set of concurrency centers witnessing all the
  $k$-incidences required by the $k$-consistency. We choose $P$ of
  minimum size, so that
  \begin{equation}\label{eq:p}
    |P| \le \pth{\sum_{i=1}^m |\Li_i|} \binom{m-1}{k-1}.
    \end{equation}
  Let $f(x_1,\ldots, x_k)$ be a nontrivial (not necessarily
  homogeneous) polynomial that vanishes at every point of $P$ and has
  minimal total degree. We claim that
  \begin{equation}\label{eq:f}
    \binom{m-1}{k-1} \le \deg f \quad \hbox{and} \quad \binom{\deg f +k-1}{k} \le |P|.
  \end{equation}
  Note that inequalities~\eqref{eq:p} and~\eqref{eq:f} together imply
  the statement.
  
  It remains to prove inequalities~\eqref{eq:f}. The second inequality
  follows from the minimality of $\deg f$; if it were false, we would be able to find a non-zero polynomial
  of degree $\deg f-1$ vanishing on $P$ by solving for its coefficients. 
  For the first inequality, we argue
  \emph{ab absurdum}. Assume that it fails. Then every line in $\cup_i
  \Li_i$ intersects $\{f=0\}$ in strictly more points than the degree of
  $f$. This implies that every line in $\cup_i \Li_i$ is contained in
  $\{f=0\}$. Since no incidence is flat, this in turn implies that 
$\nabla f=(\frac{\partial f}{\partial x_1},\dotsc,\frac{\partial f}{\partial x_d})$
vanishes at every point of $P$. So, every polynomial of the form $\frac{\partial f}{\partial
    x_j}$ vanishes on all of $P$; all these polynomials have total
  degree strictly smaller than $f$. Since $f$ is non-constant, then at least one of these polynomials is
  nontrivial.  This contradicts the minimality of the total degree of
  $f$.
\end{proof}

For $m=k+1$, the bound of Proposition~\ref{p:flatd} is $\frac1k
\binom{2k}k$, so the number of lines required grows exponentially with
$k$.

\subparagraph{Non-concurrent colors.} 

Theorems~\ref{t:proba} and~\ref{t:alg} both use a grid in $\R^{k+1}$
to start with $k+1$ color classes, each of size $n^k$, where every
line is involved in $n$ colorful incidences. Recall that in this
setup, every color class is concurrent (it consists of parallel
lines). This is in fact important, perhaps essential. To see this,
note that any two of our starting color classes contain arbitrarily
large subsets whose intersection graph is dense. This is impossible,
generically, if we try to work with families of lines that are secant
to two skew lines in $\R^3$. \footnote{This choice is motivated by the
  design of {\em two-slit
    cameras}~\cite{trager:hal-01506996,batog2010admissible}.}

\begin{proposition}\label{p:nogrid}
  For $i=1,2$, let $\Gamma_i$ denote the set of lines secant to two
  fixed lines $s_i$ and $s_i'$ in $\R^3$. Let $A$ and $B$ be two sets
  of $n$ lines from $\Gamma_1$ and $\Gamma_2$, respectively. If the
  lines $s_1$, $s_1'$, $s_2$ and $s_2'$ are in generic position then
  the intersection graph of $A$ and $B$ has $O\pth{n^{4/3}}$ edges.
\end{proposition}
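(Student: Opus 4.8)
The plan is to reduce the problem to an incidence count between points and well-behaved algebraic curves in the plane, and then invoke a Szemer\'edi--Trotter-type bound. Since $s_1,s_1'$ are skew (by genericity), every line of $\Gamma_1$ meets each of them in exactly one point, so I parametrize $a\in\Gamma_1$ by the pair $(s,s')\in s_1\times s_1'\cong\R^2$ recording where $a$ crosses $s_1$ and $s_1'$; likewise $b\in\Gamma_2$ is recorded by $(t,t')\in s_2\times s_2'$. Two lines $a,b$ meet only if the four points $a\cap s_1,\ a\cap s_1',\ b\cap s_2,\ b\cap s_2'$ are coplanar, so the number of edges of the intersection graph is at most the number of coplanar such quadruples. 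Coplanarity of four points is the vanishing of the $4\times 4$ determinant whose columns are the points augmented by a row of ones; as each point is an affine function of a single scalar, this determinant is a polynomial $F(s,s',t,t')$ of degree at most one in \emph{each} variable. Fixing $b=(t,t')$ gives a curve $C_b=\{(s,s'): \alpha(b)\,ss'+\beta(b)\,s+\gamma(b)\,s'+\delta(b)=0\}$ in the $A$-plane (a bidegree-$(1,1)$ curve, i.e.\ the graph of a M\"obius map), and symmetrically each $a$ determines such a curve $D_a$ in the $B$-plane. Thus the edge count is bounded by the number of incidences between the $n$ points $\{a\}$ and the $n$ curves $\{C_b\}$.

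The next step is to verify that $\{C_b\}$ is a family of curves with \emph{two degrees of freedom and bounded multiplicity}, which is exactly what feeds into the Pach--Sharir generalization of the Szemer\'edi--Trotter theorem. Both required properties follow from the classical fact that four lines in general position in $\R^3$ admit at most two common transversals. Indeed, a point $a$ lying on both $C_b$ and $C_{b'}$ corresponds to a line of $\Gamma_1$ meeting the four lines $s_1,s_1',b,b'$, so two distinct curves meet in at most two points; and two distinct points $a,a'$ both lying on some $C_b$ correspond to a line of $\Gamma_2$ meeting the four lines $a,a',s_2,s_2'$, so at most two curves of the family pass through any two given points. Hence the family has $k=2$ degrees of freedom with multiplicity $s=2$.

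With these two properties in hand I apply the Pach--Sharir incidence bound: for $m$ points and $n$ curves with two degrees of freedom and bounded multiplicity, the number of incidences is $O\!\left(m^{2/3}n^{2/3}+m+n\right)$. Setting $m=n$ gives $O(n^{4/3})$, and since the edges of the intersection graph form a subset of these incidences, the proposition follows.

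The main obstacle is the genericity bookkeeping in the middle step. The bound ``at most two common transversals'' fails precisely when the relevant quadruple of lines lies in a single ruling of a common quadric (a regulus), in which case there are infinitely many transversals; I must show that generic choice of $s_1,s_1',s_2,s_2'$ excludes this for \emph{every} pair $a,a'\in\Gamma_1$ and \emph{every} pair $b,b'\in\Gamma_2$, not merely for generic ones, since $A$ and $B$ are arbitrary subsets. The key point to nail down is that no pair $a\neq a'$ of transversals of $s_1,s_1'$ can be coregular with the fixed skew lines $s_2,s_2'$ once the four fixed lines are chosen generically, together with the (easier) facts that $s_1,s_1'$ and $s_2,s_2'$ are pairwise skew and that distinct $b,b'$ yield distinct curves $C_b\neq C_{b'}$. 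Carrying out this uniform non-degeneracy argument, and thereby pinning down exactly which genericity hypotheses on $s_1,s_1',s_2,s_2'$ are needed, is the crux of the proof.
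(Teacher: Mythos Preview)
Your overall plan---parametrize each $\Gamma_i$ by $s_i\times s_i'\cong\R^2$, encode incidence by a low-degree polynomial, and invoke a Szemer\'edi--Trotter-type bound---is exactly the paper's. The difference lies in which incidence theorem is applied and what non-degeneracy is needed to feed it.

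The paper does not try to verify the full ``two degrees of freedom'' hypothesis for Pach--Sharir. Instead it shows the incidence graph is $K_{3,3}$-free and applies the semi-algebraic K\H{o}v\'ari--S\'os--Tur\'an theorem of Fox--Pach--Sheffer--Suk--Zahl. The $K_{3,3}$ argument is short: if three lines $a_1,a_2,a_3\in A$ each meet three lines $b_1,b_2,b_3\in B$, then (generically) the $b_j$ are pairwise skew and determine a quadric $Q$; each $a_i$, meeting all three $b_j$, lies in the other ruling of $Q$. Now $s_1$ meets $a_1,a_2,a_3$, three lines of one ruling, so $s_1$ lies in the ruling of the $b_j$'s; likewise $s_1'$. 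Symmetrically $s_2,s_2'$ lie in the ruling of the $a_i$'s. Lines from opposite rulings intersect, so $s_1$ meets $s_2$, contradicting genericity of $\{s_1,s_1',s_2,s_2'\}$.

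This sidesteps precisely the obstacle you flagged. Your Pach--Sharir route needs that for \emph{every} pair $a\neq a'$ in $\Gamma_1$ the quadruple $a,a',s_2,s_2'$ has at most two common transversals, and symmetrically. That uniform statement is not a consequence of generic position of the four fixed lines: fixing $s_2,s_2'$ and varying $a\in\Gamma_1$, the condition that the unique quadric through $s_2,s_2',a$ contain a second line of $\Gamma_1$ in the same ruling is a single algebraic equation on the $2$-dimensional $\Gamma_1$, and there is no reason for it to be vacuous. When it holds, infinitely many $b\in\Gamma_2$ (the entire opposite ruling) pass through both $a$ and $a'$, and these give distinct curves $C_b$, so the ``at most two curves through two points'' hypothesis fails. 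Thus the Pach--Sharir hypotheses are not literally satisfied, and the crux you identified is a genuine gap rather than bookkeeping. The clean fix is exactly the paper's: replace Pach--Sharir by the semi-algebraic $K_{t,t}$ theorem, for which only the (weaker and verifiable) $K_{3,3}$-freeness above is required.
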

\begin{proof}
  First, note that the intersection graph of $A$ and $B$ is
  semi-algebraic: parametrizing $\Gamma_i$ by $s_i \times s_i' \simeq
  \R^2$ makes the incidence an algebraic relation, as can be deduced
  from the bilinearity of incidence in Pl\"ucker coordinates.

  Next, remark that if this graph contains a complete bipartite
  subgraph $K_{3,3}$, then the lines $\{s_1,s_1',s_2,s_2'\}$ are in a
  special position. Indeed, in the generic case, these two triples of
  lines come from the two families of rulings of a quadric
  surface~\cite[$\mathsection 10$]{veblen1918projective}; the lines
  $s_1$, $s_1'$, $s_2$ and $s_2'$ are also rulings of that quadric, so
  \emph{both} $s_1$ and $s_1'$ intersect \emph{both} $s_2$ and
  $s_2'$. In the non-generic cases, the six lines must be coplanar
  with $s_1$ and $s_2$.

  Now, we apply the semi-algebraic version of the
  K\H{o}v\'ari--S\'os--Tur\'an theorem~\cite{fox2014semi}, and obtain that
  the number of edges of our graph is $O\pth{n^{4/3}}$.
\end{proof}

\noindent
 We see the previous result as indication that a straight forward adaptation of our probabilistic
construction to the case of two-slits is impossible. We would like to improve the bound in Proposition~\ref{p:nogrid} from
$O\pth{n^{4/3}}$ to $O(n)$.

\begin{remark}
  Note that the genericity assumption in Proposition~\ref{p:nogrid} is on
  the sets $\Gamma_i$, not on their subsets. The analogue for
  concurrent sets of lines would be to require that the centers of
  concurrence are in generic position; this clearly does not prevent
  finding arbitrarily large subsets with dense intersection graphs.
\end{remark}

\subparagraph{Extension to continuous sets of lines.}

The constructions of Theorems~\ref{t:proba} and~\ref{t:alg} can be
turned into continuous families of lines as follows.

First, we follow either construction up to the point where we have a
family $\Li$ of lines of $k+1$ colors in $\R^{k+1}$ that is
$k$-consistent, without colorful incidence, and where each color class
is parallel. Consider a parameter $\epsilon>0$, to be fixed later. For
every $i$, we build a set $\Li_i(\epsilon)$ by considering every line
$\ell \in \Li_i$ in turn, and adding to $\Li_i$ every line $\ell'$
parallel to $\ell$ such that the distance between $\ell$ and $\ell'$
is most $\epsilon$. Note that for $\epsilon < 1/2$ the family
$\Li(\epsilon)$ is $k$-consistent and without colorful incidence.

Now, consider a generic projection $f\colon \R^{k+1} \to \R^d$ for the
desired $d$. For any $\epsilon>0$ the family $\Li(\epsilon)$ is
$k$-consistent. We observe that for $\epsilon>0$ small enough, it also
remains without colorful incidence. Let $\tau$ denote the minimum
distance, in the projection, between a $k$-incidence and a line (of
any color) not involved in that incidence. Every $k$-incidence in
$\Li$ gives rise, in $\Li(\epsilon)$, to $k$ tubes that intersect
in a bounded convex set $B$ of size $O(\epsilon)$.
Choosing
$\epsilon>0$ such that the diameter of $f(B)$ is less than $\tau/2$
ensures that the corresponding family $f(\Li(\epsilon))$ has no
colorful incidence.

For a given family of colored lines $\Li$ define the set $P_S$ to be
the set of points incident to at least one line of each of the color
classes in $S$; see Figure~\ref{f:inconsistent}. 
Notice that in our examples, for each set $S$ of $k$
colors the set $P_S$ 
is highly disconnected. As mentioned in the introduction, Trager et
al.~\cite{trager2016consistency} showed that if a family of sets of
lines is $3$-consistent and for each $S$ of size $3$, the set $P_S$ is
convex, then the whole family is consistent. An interesting open
question is whether an analogue theorem holds if instead of convexity,
we assume that for every set $S$ of size $k$, the set $P_S$ is
sufficiently connected.

\section{Constructions with few lines}

The configurations constructed in Sections~\ref{s:proba}
and~\ref{s:alg} have at least $32$ lines per color. This is
considerably larger than the sets of lines involved in some of the
questions around consistency that arise in computer vision. In the
example of structure-from-motion mentioned in introduction, when the
camera is central, every color class has only $5$ to $7$ lines. It
turns out that for sufficiently small configurations, $k$-consistency
does imply some colorful incidences:

\begin{lemma}\label{p:verysmall}
  Any $3$-consistent colored set of lines $\Li = \Li_1 \cup \Li_2 \cup
  \Li_3 \cup \Li_4$ in $\R^d$ with $|\Li_1| = |\Li_2| = |\Li_3| =
  |\Li_4|=2$ contains a colorful incidence.
\end{lemma}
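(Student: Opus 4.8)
The plan is to argue by contradiction: assume $\Li$ has no colorful incidence and derive a contradiction. Write $\Li_1=\{a_0,a_1\}$, $\Li_2=\{b_0,b_1\}$, $\Li_3=\{c_0,c_1\}$, $\Li_4=\{d_0,d_1\}$. The first ingredient is the elementary observation that two \emph{distinct} $3$-subsets $T,T'$ of $\{1,2,3,4\}$ miss different colors, so $T\cup T'=\{1,2,3,4\}$. Hence if on one line $\ell$ a $T$-incidence and a $T'$-incidence (with $T\neq T'$, both containing the color of $\ell$) were located at the same point $q$, then $q$ would carry a line of every color and be a colorful incidence. Since each line lies in exactly three triples and, by $3$-consistency, carries an incidence of each of these three types, under our assumption those three incidences sit at three \emph{distinct} points of $\ell$. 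I also record that two distinct lines meet in at most one point, so the meeting point of any two cross-color lines is well defined when it exists.

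Next I would pin down the combinatorial incidence pattern. Applying $3$-consistency to $\{1,2,3\}$, the line $a_0$ lies in a concurrent triple with one line of color $2$ and one of color $3$; relabel so these are $b_0,c_0$, giving $P_0=a_0\cap b_0\cap c_0$. Tracing the remaining coverage requirements for all four triples, while forbidding every coincidence that would create a colorful incidence, forces (after relabeling within colors) the full $(8_3)$ pattern: eight triple-points $P_0,P_1,Q_0,Q_1,R_0,R_1,S_0,S_1$ with $a_0=\{P_0,Q_0,R_0\}$, $b_0=\{P_0,Q_1,S_0\}$, $c_0=\{P_0,R_1,S_1\}$, $d_0=\{Q_0,R_1,S_0\}$ and the index-swapped lines for $a_1,b_1,c_1,d_1$. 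Degenerate situations --- a covering using more than two incidence points for some triple, or an incidence carrying two lines of the same color --- either collapse onto this pattern or exhibit the missing colorful incidence directly, and are handled separately. This bookkeeping is the most tedious step.

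With the pattern in hand I would show the configuration is planar. The lines $a_0,b_0,c_0$ are concurrent at $P_0$, while $d_0$ meets them at the three \emph{distinct} points $Q_0,S_0,R_1$, none equal to $P_0$. But three concurrent, non-coplanar lines admit no transversal meeting all three away from their common center: placing the center at the origin and the lines along a basis, the three meeting points are affinely independent and so cannot be collinear. Hence $a_0,b_0,c_0$ lie in a common plane $\Pi$, and $d_0$, having two points in $\Pi$, lies in $\Pi$; the same argument places $d_1$ and then $a_1,b_1,c_1$ in $\Pi$, each meeting two lines already in $\Pi$ at two distinct points. Thus all eight lines are coplanar.

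Finally I would derive the contradiction by coordinates in this plane. Fixing the projective frame $P_0=(1,0,0)$, $P_1=(0,1,0)$, $Q_0=(0,0,1)$, $Q_1=(1,1,1)$, the collinearity conditions determine the four remaining points up to scalars $\alpha,\beta$ and reduce, after elimination, to $\alpha\beta=1$ together with $\alpha+\beta=1$; so $\alpha,\beta$ are roots of $t^2-t+1$, whose discriminant $-3$ admits no real root. This is precisely the classical non-realizability of the Möbius--Kantor $(8_3)$ configuration over $\R$, and it contradicts the existence of the assumed configuration, so $\Li$ must contain a colorful incidence. The two places I expect to absorb the real work are the combinatorial normalization to the $(8_3)$ pattern (including the degenerate cases) and the recognition that the surviving configuration is the non-real Möbius--Kantor one.
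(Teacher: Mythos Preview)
Your approach is valid in outline but takes a genuinely different route from the paper's. The paper first projects generically to $\R^2$ (rather than arguing planarity from the incidence structure), then dualizes so that each $\Li_i$ becomes a two-point set $P_i$, and applies a projective transform sending the two points of $P_1$ to the horizontal and vertical directions at infinity. The $3$-consistency conditions involving color~$1$ then force three horizontal and three vertical lines, each carrying exactly two points of $P'=P_2\cup P_3\cup P_4$ of distinct colors, while no rainbow line (one meeting all of $P_2,P_3,P_4$) can be axis-parallel. A direct check shows that only five of the nine grid intersections (the corners and the center) can lie on a rainbow line, contradicting $|P'|=6$; the whole argument is a few lines with essentially no case analysis. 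Your route trades this brevity for a pleasing structural explanation --- the obstruction is exactly the non-realizability of the M\"obius--Kantor $(8_3)$ configuration over~$\R$ --- but the price is the combinatorial normalization you flag as ``tedious'': for instance, a single colour triple can in principle support four $3$-incidence points no two of which partition the six lines, and ruling such patterns out genuinely uses the interaction with the fourth colour. That step is where most of the work in your proof actually lives, whereas the paper's grid-counting argument sidesteps it entirely.
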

\begin{proof}
  Let us prove the case where $d=2$; the general case follows by
  projecting onto a generic $2$-plane. Let $P_i$ denote the dual of
  $\Li_i$, and let $P = P_1 \cup P_2 \cup P_3 \cup P_4$. Assume, by
  contradiction, that $\Li$ contains no colorful incidence,
  \emph{i.e.} that no line intersects every $P_i$. Let $P' = P_2 \cup
  P_3 \cup P_4$ and let us apply a projective transform to map the
  points of $P_1$ to the horizontal and vertical directions,
  respectively. We call a line that contains a point of each of $P_2$,
  $P_3$ and $P_4$ a \emph{rainbow line}.

  Since $\Li$ is $3$-consistent, for any point $x \in P$ and any
  choice of $2$ other colors, there is a line through $x$ that
  contains a point of each of these colors. Since $\Li$ has no
  colorful incidence, there must exist three horizontal lines and
  three vertical lines that intersect~$P'$, and each must contain
  exactly two points of $P'$ of distinct colors. Moreover, no rainbow
  line can be horizontal or vertical. But this implies that out of the
  $9$ intersections between horizontal and vertical lines, only $5$
  (the corners and the center) can be on a rainbow line. This
  contradicts $|P'|=6$.
\end{proof}

\noindent
We prove here a slightly stronger lower bound:

\begin{theorem}\label{t:small-ii}
  Let $\Li = \Li_1 \cup \Li_2 \cup \Li_3 \cup \Li_4$ be a
  $3$-consistent colored set of lines in $\R^3$ with no colorful
  incidence and concurrent colors. If $|\Li| < 24$, then
  $\Li$ is contained in a $2$-plane.
\end{theorem}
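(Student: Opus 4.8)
The plan is to work directly in $\R^3$ and argue by contradiction: assume $\Li$ is $3$-consistent, has concurrent colors and no colorful incidence, is not contained in a $2$-plane, and $|\Li|<24$. (I avoid projecting to a plane, since a generic projection creates spurious crossings and would destroy the no-colorful-incidence hypothesis.) Write $c_i$ for the center of concurrency of $\Li_i$, and assume every class is nonempty, as otherwise the statement is vacuous. The first step is to show the four centers are pairwise distinct. If $c_i=c_j=:c$, then for any third color $k$ and any $\ell\in\Li_k$, the $\{i,j,k\}$-incidence on $\ell$ guaranteed by $3$-consistency sits at a point $q$ carrying a line of color $i$ and a line of color $j$; both contain $c$, so if $q\neq c$ they would share the two points $q$ and $c$ and hence coincide, contradicting that lines of distinct colors are distinct. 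Thus $q=c$ and $\ell\ni c$; letting $k$ and $\ell$ vary, all lines of the remaining two colors pass through $c$, so $c$ is a colorful incidence, a contradiction. Hence $c_1,c_2,c_3,c_4$ are four distinct points.

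The second step records the rigidity forced by concurrency. For a point $q\neq c_i$, the only color-$i$ line through $q$ is $\overline{c_iq}$; consequently a $3$-incidence among colors $i,j,k$ at $q$ consists exactly of $\overline{c_iq},\overline{c_jq},\overline{c_kq}$, and the pairwise contact of colors $i$ and $j$ lies in a plane containing the line $L_{ij}:=\overline{c_ic_j}$. Fixing $\ell\in\Li_1$, the incidences of types $\{1,2,3\}$, $\{1,2,4\}$ and $\{1,3,4\}$ required on $\ell$ occur at three points $q_{23},q_{24},q_{34}$ of $\ell$ that are pairwise distinct: if, say, $q_{23}=q_{24}$, that common point would carry lines of all four colors, a colorful incidence. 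The two color-$2$ partners at $q_{23}$ and $q_{24}$ both lie in the plane $\mathrm{span}(c_2,\ell)$ (a member of the pencil through $L_{12}$) and meet $\ell$ at distinct points, hence are two distinct lines of $\Li_2$.

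The third step turns these local observations into a global count, and here concurrency is essential: the non-concurrent $12$-line examples of Figure~\ref{f:3333} show the hypothesis cannot be dropped. For each $\ell\in\Li_1$ the previous paragraph produces forced partners in $\Li_2,\Li_3,\Li_4$ lying in the pencils of planes through $L_{12},L_{13},L_{14}$. Distinct color-$1$ lines can reuse one color-$2$ line only as a common transversal through $c_2$, in very restricted ways, and the crux is to bound this reuse: one shows that, unless the color-$1$ lines are highly degenerate, each of $\Li_2,\Li_3,\Li_4$ (and by symmetry $\Li_1$) is forced to contain at least six lines, giving $|\Li|\ge 24$. The complementary possibility, that enough reuse occurs to keep the count small, is shown to force all color-$1$ lines, and then by propagation all lines of every color together with the four centers, into a single plane. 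Organizing the argument by the affine type of $\{c_1,c_2,c_3,c_4\}$ is convenient: if the centers are not coplanar the configuration is automatically non-planar and the bound $|\Li|\ge 24$ must be established; if the centers span a plane $\pi$, then any line leaving $\pi$ (necessarily through its center, which lies in $\pi$, and so meeting $\pi$ only there) has its contact points off $\pi$ and hence forces out-of-plane partners in every other color, again driving the count to $24$, so that $|\Li|<24$ leaves only the planar configuration.

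The main obstacle is exactly this quantitative forcing, with the precise constant $24$, together with the degenerate incidences that the clean picture ignores: $3$-incidences occurring at a center $c_i$, parallel lines meeting only at infinity, and \emph{flat} $3$-incidences, where $q$ lies in the plane of three centers so that $\overline{c_iq},\overline{c_jq},\overline{c_kq}$ become coplanar. Each such case must be shown either to reduce the number of genuinely distinct forced lines still further, or to pin additional points into a common plane; the bookkeeping of how many distinct lines each pencil of planes is compelled to contain, across the three pairs of colors simultaneously and without double counting, is where the real work lies. Lemma~\ref{p:verysmall} already forces $|\Li|>8$ for any such configuration; the present theorem sharpens this to $|\Li|\ge 24$ once non-planarity is assumed.
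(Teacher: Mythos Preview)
Your proposal is not a proof but an outline whose decisive step is left open. Steps~1 and~2 are fine (distinct centers, three distinct $3$-incidence points on each line), but Step~3 is where the theorem lives, and there you only announce what has to be shown: that ``reuse'' of partner lines is so constrained that each color class is forced to have at least six lines. You do not carry out this counting, and you say as much (``the bookkeeping \ldots\ is where the real work lies''). Moreover, the intermediate target you aim for --- $|\Li_i|\ge 6$ for each $i$ --- is not what the paper proves and may not even hold in every non-planar $3$-consistent example with concurrent colors; the paper only establishes $|\Li|\ge 24$ in total, and its argument allows individual color classes of size $4$ or~$5$ within a $24$-line configuration.

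The paper's route is structurally different and sidesteps the delicate reuse-counting entirely. It first picks a smallest color class (size $\le 5$ since $|\Li|<24$), then shows that the planes through the center $c_2$ of a second color and the lines of $\Li_1$ pair up, so there are at most two such planes; a short decomposition lemma (Lemma~\ref{l:decomp}) then forces all of $\Li$ into these two planes, with each planar piece itself $3$-consistent, without colorful incidence, and with concurrent colors. The count is then immediate: each planar piece has at least $12$ lines, either because all four of its classes have size $\ge 3$, or because one class has size $2$, in which case a separate lemma (Lemma~\ref{l:class2}, proved by a dual cycle analysis) forces the other three classes to have size $\ge 4$. Two pieces of size $\ge 12$ give $|\Li|\ge 24$. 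If you want to salvage your direct approach you would need to actually execute the forcing argument with the exact constant~$24$ and handle the degenerate cases you list; absent that, the decomposition-into-two-planes strategy is both shorter and what makes the constant fall out cleanly.
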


\noindent
We do not know whether the constant $24$ is best possible. 

\subparagraph{Classification.}

We also provide a characterization of $3$-consistent, $4$-colored sets
of lines in $\R^3$ with no colorful incidence and $3$ lines per
color. Forgetting for a moment about colors, any such configuration
must consist of $12$ lines and $12$ points, every point on $3$ lines
and every line through $3$ points; in the classical tabulation of
projective configurations, they are called \emph{$(12_3)$
  configurations}. It turns out that there are $229$ possible
incidence structures meeting this description, and that every single
one of them is realizable in $\R^3$~\cite{gropp1997configurations}. To
analyze what happens when we add back the colors and the consistency
assumption, we consider two special $(12_3)$ configurations:

\begin{itemize}
\item A \emph{Reye-type configuration}\footnote{The $(12_416_3)$
  configuration of Reye consists of $12$ points and $16$ lines in
  $\R^3$ such that every point is on $4$ lines and every line contains
  $3$ points; its realizations are projectively equivalent to the $16$
  lines supporting the $12$ edges and four long diagonals of a cube,
  together with that cube's vertices and center and the $3$ points at
  infinity in the directions of its edges} is a configuration obtained
  by selecting $12$ out of the $16$ lines supporting the $12$ edges
  and four long diagonals of a cube, in a way that produces a $(12)_3$
  configuration.

\item A \emph{Desargues-type} configuration is defined from six planes
  $\Pi_1,\Pi_2, \ldots, \Pi_6$ in $\R^3$ where (i) each of
  $\{\Pi_1,\Pi_2, \ldots, \Pi_5\}$ and $\{\Pi_1,\Pi_2, \Pi_3,\Pi_6\}$
  is in general position, and (ii) $\Pi_4,\Pi_5$ and $\Pi_6$ intersect
  in a line. The configuration consists of all lines that are
  contained in exactly two planes.
\end{itemize}

\noindent
Here is our classification:

\begin{theorem}\label{t:small-i}
  Let $\Li = \Li_1 \cup \Li_2 \cup \Li_3 \cup \Li_4$ be a
  $3$-consistent colored set of lines in $\R^3$ with no colorful
  incidence.  If every color class has size $3$, and $\Li$ is not
  contained in a $2$-plane, then it is a Desargues-type or a
  Reye-type configuration colored as in Figure~\ref{f:3333}.
\end{theorem}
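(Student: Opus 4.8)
The plan is to reduce the statement to a finite combinatorial classification of the underlying incidence pattern, and then match each surviving possibility to one of the two named geometric configurations.

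\emph{Combinatorial skeleton.} I would first show that the rainbow incidences of $\Li$ form a $(12_3)$ configuration. Since $\Li$ has no colorful incidence, every concurrence point carries lines of at most three distinct colors. Fix a line $\ell\in\Li_i$; for each of the three pairs of colors in $\{1,2,3,4\}\setminus\{i\}$, the $3$-consistency hypothesis places on $\ell$ a point meeting those two colors, and the three points so obtained are pairwise distinct, because any coincidence would create a four-colored, hence colorful, incidence. Thus every line lies on at least three \emph{rainbow} points (points meeting three lines of three distinct colors). A double count then pins this down: for each of the four color-triples $S$, covering the nine lines of colors in $S$ by rainbow points of color-set $S$ requires at least three such points, each rainbow point serves at most one triple, and there are exactly twelve lines to carry them; this forces exactly twelve rainbow points, each on three lines, and each line on exactly three rainbow points. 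The delicate point here is to rule out points with a repeated color or with four or more lines; once excluded, the no-colorful condition is automatic and the skeleton is a $(12_3)$ configuration.

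\emph{Rigid pattern.} Assigning to each point its \emph{co-color} (the unique color absent from its three lines), the same count yields a rigid combinatorial pattern: there are exactly three points of each co-color, the three points sharing a co-color cover bijectively the nine lines of the complementary color-triple, and each line meets exactly one point of each co-color different from its own. I would then encode the four bijective matchings (one per co-color) by permutations in $S_3$ and quotient by the relabeling symmetry — relabeling the three lines within each color class and permuting the four colors. This parametrizes the admissible colored incidence structures directly, bypassing the full list of $229$ abstract $(12_3)$ types and leaving only a short finite list to examine by hand.

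\emph{Geometric identification.} For each type on this list I would analyze its realizations in $\R^3$ up to projective transformation. The incidences force prescribed coplanarities and collinearities among the twelve lines; feeding in the non-planarity hypothesis, classical projective geometry identifies the genuinely three-dimensional realizations with the named objects: Desargues' theorem produces the six-plane (Desargues-type) configuration, while the description of Reye's $(12_416_3)$ configuration via the edges and long diagonals of a cube produces the Reye-type configuration. I would check that the induced coloring is forced to be the one of Figure~\ref{f:3333}, and that every remaining type is realizable only inside a $2$-plane and is therefore discarded by the hypothesis.

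I expect the geometric step to be the main obstacle. The combinatorial steps are elementary counting and a finite symmetry quotient, but showing that each admissible colored $(12_3)$ skeleton is \emph{projectively rigid} in $\R^3$ — that its non-planar realizations are exactly the Desargues- or Reye-type configurations, with no further moduli, and that the spurious types genuinely cannot leave a plane — requires carefully converting abstract incidences into enough algebraic constraints to determine the projective class, which is where the argument is most technical.
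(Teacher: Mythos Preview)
Your three-step plan (pin down a $(12_3)$ skeleton, classify the colored incidence types, then realize geometrically) matches the paper's architecture. The gap is in \emph{where} you inject the non-planarity hypothesis. You defer it to the last step; the paper uses it at the very first step, and this is not cosmetic.

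Concretely, the paper begins with a sequence of geometric lemmas that use non-planarity to show: (a) each color class is either three pairwise-skew lines or three concurrent non-coplanar lines (a third, ``two-plane'' possibility is ruled out using non-planarity), and (b) every line meets the lines of any other color class in \emph{exactly two} points, hence meets exactly two lines of each other color. Statement (b) is what forces the clean $(12_3)$ picture: the six cross-color intersections on a given line pair up into three rainbow points, one per co-color. Your counting argument gives the lower bound (at least three rainbow points per line) but supplies no matching upper bound, and the ``delicate point'' you flag --- excluding repeated colors or extra lines at an incidence --- is precisely what (b) handles. Without non-planarity this fails outright: in a plane every line meets all three lines of each other color, and nothing combinatorial prevents a line from carrying two $\{i,j,k\}$-rainbow points. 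So the $(12_3)$ skeleton is not a consequence of counting plus the no-colorful hypothesis alone.

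There is a second place the geometric dichotomy from (a) is used that your plan does not account for. After reducing (via a graph on the $A$-incidences) to two possible patterns for the incidences involving color $A$, the paper determines the three remaining $BCD$-incidences by splitting on whether $A$ is concurrent or skew: concurrent $A$ forbids $BCD$-lines hitting different $A$-pairs from meeting off $A$, skew $A$ forbids $BCD$-lines hitting the same $A$-pair from meeting off $A$. This geometric input is what collapses the list to exactly two colored incidence types (your ``short finite list''), each of which is then identified with Reye or Desargues. A purely combinatorial permutation quotient, as you propose, will produce a longer list, and you would then have to eliminate the spurious types in the realization step --- possible, but more work than you suggest, and still requiring the concurrent/skew dichotomy at some point.

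In short: move the non-planarity hypothesis to the front and prove the analogue of (a) and (b) first. Once you have ``exactly two intersections with each other color class'', both your counting and your permutation classification go through cleanly.
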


\subsection{Classification: Proof of Theorem~\ref{t:small-i}}
\label{s:class3}

Let $\Li = \Li_1 \cup \Li_2 \cup \Li_3 \cup \Li_4$ be a $3$-consistent
colored set of lines in $\R^3$ with no colorful incidence and that is
not coplanar. We start with a few simple observations:

\begin{enumerate}
\item Every line $\ell$ in a color class $X$ intersects the lines from
  any other color class $Y$ in at least two distinct points. Indeed,
  $\ell$ needs to intersect the lines of $Y$ to satisfy the
  $3$-consistency condition, and if it does so in a single point, then
  that point is a colorful incidence.
  
\item If all lines of a color class are coplanar, then the
  configuration is coplanar. This follows readily from the previous
  observation.

\item Given two lines $\ell, \ell'$ and a point $x$ in $\R^3$, if
  $\ell$ and $\ell'$ are skew and $x \notin \ell \cup \ell'$, then
  there is a unique line through $x$ that intersects $\ell$ and
  $\ell'$.
\end{enumerate}

\subsubsection{Geometric analysis}

We first restrict the geometry of a color class of size three. We
denote the lines of $\Li_1$ by $a_1, a_2, a_3$ ($b_i$ for $\Li_2$,
$c_i$ for $\Li_3$ and $d_i$ for $\Li_4$). We write $<a,b>$ for the
plane spanned by lines $a$ and $b$ and $b$, $abc$ for a common
incidence of $a$, $b$ and $c$, and $p_{abc}$ for the point of
concurrence of $a$, $b$ and $c$.

\begin{lemma}\label{l:3}
  If $\Li_1 = \{a_1,a_2,a_3\}$ is a color class of size three, then

  \begin{enumerate}
  \item the lines in $\Li_1$ are pairwise skew, or
  \item the lines in $\Li_1$ are concurrent and not coplanar, or
  \item $a_1$ intersects both $a_2$ and $a_3$ and every line in $\Li_2 \cup \Li_3 \cup \Li_4$ is contained in $<a_1,a_2>$ or in $<a_1,a_3>$.
  \end{enumerate}
  Moreover, every line in $\Li_2 \cup \Li_3 \cup \Li_4$ intersects
  the lines of $\Li_1$ in exactly two points.
\end{lemma}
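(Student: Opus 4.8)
The plan is to build on the three structural observations listed just above. The second observation shows that the lines $a_1,a_2,a_3$ cannot be coplanar, for otherwise the whole configuration would be coplanar, contrary to hypothesis. Three non-coplanar lines in $\R^3$ are distinguished by which of the three pairs among them meet, and the possibilities are: no pair meets (pairwise skew), all three pass through a common point (concurrent), exactly two pairs meet, or exactly one pair meets. The first possibility gives conclusion~(1) and the second gives conclusion~(2) immediately. When exactly two pairs meet they must share a common line — after relabelling, $a_1$ meets $a_2$ at a point $P$ and $a_3$ at a point $Q\neq P$, with $a_2,a_3$ skew — which is precisely the first assertion of~(3). So the substance is to rule out ``exactly one pair meets'' and to extract the planar conclusion of~(3), together with the ``moreover'' clause.

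I would first establish the ``moreover'' statement, since it governs the rest of the analysis. The lower bound (at least two points) is the first observation. For the upper bound I would suppose that a line of another color, say $\ell\in\Li_2$, meets $a_1,a_2,a_3$ in three distinct points $P_1,P_2,P_3$, and derive a colorful incidence. Applying $3$-consistency to $\ell$ for the color triples $\{1,2,3\}$ and $\{1,2,4\}$ forces two of the $P_i$ to acquire colors $3$ and $4$ respectively; the no-colorful-incidence hypothesis forbids these being the same point, and one then propagates the $3$-consistency requirements of the color-$1$ lines $a_i$ through the $P_i$ to collide a color-$3$ line and a color-$4$ line at a point already carrying colors $1$ and $2$. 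This colorful-incidence bookkeeping is the first delicate step.

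With the ``moreover'' in hand I would treat the two-pairs case. Set $\pi_2=\langle a_1,a_2\rangle$ and $\pi_3=\langle a_1,a_3\rangle$; these are distinct planes meeting along $a_1$, with $a_3\cap\pi_2=\{Q\}$ and $a_2\cap\pi_3=\{P\}$. A line of another color meets $\Li_1$ in exactly two points, and if it meets $a_1$ together with one of $a_2,a_3$ then both meeting points lie in a single plane, so the line lies in $\pi_2$ or in $\pi_3$. The only escape is a transversal to $a_2$ and $a_3$ meeting them away from $P$ and $Q$ and avoiding $a_1$, which lies in neither plane. The heart of the matter, and the main obstacle, is to rule out such ``bad transversals'' using $3$-consistency and the no-colorful-incidence hypothesis, thereby confining all nine lines of $\Li_2\cup\Li_3\cup\Li_4$ to $\pi_2\cup\pi_3$ and proving~(3). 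A parallel argument — listing the incidences that $3$-consistency forces along a line and checking that none can be completed to a four-colored point — should simultaneously dispose of the ``exactly one pair meets'' configuration, where a line cannot distribute the four colors among its two permitted intersection points with $\Li_1$ without producing a colorful incidence.

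I expect the global planar conclusion of case~(3) to be the principal difficulty: it is a statement about all the remaining lines at once, whereas $3$-consistency and the no-colorful-incidence hypothesis are local constraints. The challenge is therefore to organize the finitely many forced incidences — recalling that there are only three lines in each color class — so that every stray transversal is shown to complete to a four-colored point, which is exactly the leverage the absence of a colorful incidence provides.
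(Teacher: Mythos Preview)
Your case breakdown is correct, and you have correctly identified where the difficulty lies, but the proposal offers no mechanism for the two steps you flag as hard, and the paper's mechanism is genuinely different from the direction you gesture at. The paper does \emph{not} chase individual ``bad transversals''; instead it works with the $\{2,3,4\}$-incidences that $3$-consistency provides. Assuming $a_1,a_2$ meet, one chooses such a concurrence $p_{bcd}\notin\langle a_1,a_2\rangle$ (this exists because the configuration is not coplanar). Each of $b,c,d$ then meets $\langle a_1,a_2\rangle$ in a single point, so by observation~(1) each must also meet $a_3$; hence $b,c,d$ all lie in the plane $p_{bcd}\vee a_3$. The trace of this plane on $\langle a_1,a_2\rangle$ is a line carrying three distinct points of $a_1\cup a_2$, and is therefore $a_1$ or $a_2$ itself. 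This single stroke shows that $a_3$ is coplanar with $a_1$ or $a_2$ --- ruling out your ``exactly one pair meets'' case --- and, repeated for an arbitrary line $x$ via a $bcd$-incidence through $x$, confines every remaining line to $\langle a_1,a_2\rangle\cup\langle a_1,a_3\rangle$, which is exactly the planar conclusion of~(3).

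Your plan to prove the ``moreover'' clause first also diverges from the paper and makes life harder. The paper proves it \emph{after} the trichotomy: in cases~(2) and~(3) the bound is immediate from the geometry, and in case~(1) one takes a $\{2,3,4\}$-incidence $p_{bcd}$ on a putative triple-transversal $b$, notes that $p_{bcd}$ lies on no $a_i$ (else colorful), and applies observation~(3): $b$ is then the \emph{unique} line through $p_{bcd}$ meeting any pair $a_i,a_j$, so $c\neq b$ can meet at most one $a_i$, contradicting observation~(1). Your alternative sketch --- colouring two of the points $P_i$ via the $\{1,2,3\}$- and $\{1,2,4\}$-consistency of $\ell$ and then ``propagating along the $a_i$'' --- does not close as stated: the $\{1,3,4\}$- or $\{1,2,3\}$-incidences forced on each $a_i$ need not land at $P_i$, and I do not see how you force a fourth colour onto a point already carrying three without additional structure.
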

\begin{proof}
  Assume that we are not in case~(1) and that, wlog, the lines $a_1$
  and $a_2$ are coplanar. We first claim that there is a concurrence
  $bcd$ of lines from $\Li_2$, $\Li_3$ and $\Li_4$ such that $p_{bcd}
  \notin <a_1, a_2>$. Indeed, as the configuration is not planar, we
  can assume that $b$ is not contained in $<a_1, a_2>$. The line $b$
  intersects two lines of $\Li_1$, it must intersect $a_1$ or $a_2$;
  this point cannot be $p_{bcd}$ as it would make a colorful
  incidence, so $b \not\subset <a_1, a_2>$ forces $p_{bcd} \notin
  <a_1, a_2>$.
  
  Now, $a_3$ must be coplanar with $a_1$ or $a_2$. Indeed, each of
  $b$, $c$ and $d$ has at most one point of intersection with $a_1
  \cup a_2$, and therefore intersects $a_3$. This implies that $b$,
  $c$ and $d$ are coplanar, namely in the plane spanned by $p_{bcd}$
  and $a_3$. This implies that the intersections of $b$, $c$ and $d$
  with $<a_1, a_2>$ are three distinct, aligned points. This forces
  the intersection between the plane $p_{bcd} \vee a_3$ and $<a_1,
  a_2>$ to be one of the lines $a_1$ or $a_2$.

  Let us say, wlog, that $a_3$ is coplanar with $a_1$. If $a_3$ is
  also coplanar with $a_2$ then all three lines are either coplanar or
  concurrent. The former would imply that the entire configuration is
  coplanar, so it must be the latter; this corresponds to case~(2).

  The remaining case is when $a_3$ is coplanar with $a_1$ but not with
  $a_2$. Consider a line $x$ in $\Li_2 \cup \Li_3 \cup \Li_4$ and let
  $bcd$ be a concurrence among these three colors that involves $x$. As
  observed above, either $p_{bcd} \in <a_1, a_2>$ and $x$ is contained
  in $<a_1, a_2>$, or $p_{bcd} \notin <a_1, a_2>$ and $x$ is contained
  in $<a_1, a_3>$. We are thus in case~(3).

  Finally, let us consider a line $x \in \Li_2 \cup \Li_3 \cup \Li_4$
  and count its intersections with the lines of $\Li_1$. We already
  observed that this number is at least two. In cases~(2) and~(3) it
  is straightforward that it is also at most two. So assume that we
  are in case~(1), where the lines of $\Li_1$ are pairwise
  skew. Assume that there exists a line $b \in \Li_2$ that intersects
  the lines from $\Li_1$ in three points. Let $bcd$ be an incidence
  with lines of $\Li_3$ and $\Li_4$. The point $p_{bcd}$ cannot belong
  to a line in $\Li_1$, so there is a unique line from $p_{bcd}$ that
  intersects any two given lines in $\Li_1$. Since $c \neq b$, $c$
  intersects at most one line of $\Li_1$. This contradicts one of our
  initial observations.
\end{proof}

Remark that Lemma~\ref{l:3} only assumed that one color class has size
$3$. When all color classes have size three, we can eliminate case~(3):

\begin{lemma}\label{l:3333}
  If every color class has size $3$, then every color class consists of
  lines that are concurrent and not coplanar, or pairwise skew.
\end{lemma}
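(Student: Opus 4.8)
The plan is to apply Lemma~\ref{l:3} to each color class (legitimate now that all four have size three) and to rule out case~(3); only cases~(1) and~(2) then survive, which is exactly the assertion. So suppose for contradiction that some class, say $\Li_1=\{a_1,a_2,a_3\}$, falls in case~(3): $a_1$ meets $a_2$ and $a_3$, and every line of $\Li_2\cup\Li_3\cup\Li_4$ is contained in $P\eqdef\langle a_1,a_2\rangle$ or in $P'\eqdef\langle a_1,a_3\rangle$. These planes meet exactly along $a_1$, and since $a_2\subset P$ and $a_3\subset P'$, the whole configuration lies in $P\cup P'$. I record two reductions. By observation~(2), no color class can have all three lines in one plane (that would make the configuration coplanar, contradicting our standing assumption), so each of $\Li_2,\Li_3,\Li_4$ splits two--one between $P$ and $P'$; in particular each plane holds at least one line of each of the colors $2,3,4$. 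And every $\{2,3,4\}$-incidence has its three lines in a common plane: if, say, a color-$2$ line of $P$ met a color-$3$ line of $P'$, their common point would lie on $P\cap P'=a_1$ and, with $a_1$, be colorful.

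The crux is a pigeonhole step. Assign to each of the colors $2,3,4$ its \emph{minority plane}, the one containing just one of its lines; two colors, say $\gamma$ and $\delta$, then share a minority plane $Q\in\{P,P'\}$. Write $\beta$ for the third color, $\gamma^*,\delta^*$ for the unique $\gamma$- and $\delta$-lines in $Q$, $a_*\in\{a_2,a_3\}$ for the color-$1$ line lying in $Q$, and $p_*\eqdef a_*\cap a_1$. Here I use the basic principle that a configuration line through a point of $Q\setminus a_1$ must be contained in $Q$, since any line contained in the other plane meets $Q$ only along $a_1$. By $3$-consistency each $\beta$-line of $Q$ lies in a $\{2,3,4\}$-incidence, which by the reduction above sits inside $Q$; as $\gamma^*,\delta^*$ are the only $\gamma$- and $\delta$-lines there, this forces every $\beta$-line of $Q$ through the point $z\eqdef\gamma^*\cap\delta^*$ (if $\gamma^*$ and $\delta^*$ did not meet, no such incidence could exist, already a contradiction). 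Note $z\notin a_1\cup a_*$, as otherwise $z$ would be colorful.

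To finish I would exploit the $3$-consistency of $a_*$ itself. Being a color-$1$ line, $a_*$ lies in a $\{1,\beta,\gamma\}$-incidence and in a $\{1,\beta,\delta\}$-incidence, each a point of $a_*$. Neither can lie in $Q\setminus a_1$: there the $\gamma$-line (resp.\ $\delta$-line) through the center is $\gamma^*$ (resp.\ $\delta^*$) and the $\beta$-line through it, lying in $Q$, passes through $z$, so the center would be the unique common point of $\gamma^*$ (resp.\ $\delta^*$) and that $\beta$-line, namely $z$ --- impossible since $z\notin a_*$. Hence both incidences occur at $p_*=a_*\cap a_1$, which therefore carries lines of colors $\beta,\gamma,\delta$ together with $a_*$ of color~$1$: a colorful incidence, the desired contradiction. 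I expect the main delicate point to be the planar-confinement claim together with the affine degeneracies it conceals (for instance $\gamma^*\parallel\delta^*$, or $a_*\parallel a_1$, which case~(3) however forbids), each of which must be checked to collapse into a contradiction rather than a genuine exception.
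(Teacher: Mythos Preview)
Your argument is correct. Both your proof and the paper's begin identically: assume case~(3) for $\Li_1$, observe that each of $\Li_2,\Li_3,\Li_4$ splits $2$--$1$ between the two planes, and pigeonhole to find two colors whose minority lines lie in the same plane. The divergence is in the endgame. The paper fixes the minority plane to be $\Pi_2=\langle a_1,a_2\rangle$ with single lines $b_1\in\Li_2$, $c_1\in\Li_3$, and then explicitly traces the remaining lines $b_2,b_3,c_2,c_3$ through their forced intersection points, concluding that every point of $a_1$ meeting a line of $\Li_2$ also meets a line of $\Li_3$; the required $\{1,2,4\}$-incidence on $a_1$ is then colorful. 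You instead work inside the minority plane $Q$ with the single point $z=\gamma^*\cap\delta^*$ and argue directly that the $\{1,\beta,\gamma\}$- and $\{1,\beta,\delta\}$-incidences on $a_*$ are both forced to the point $p_*=a_*\cap a_1$, producing a colorful incidence there. Your route is somewhat more conceptual---it avoids naming six auxiliary lines and chasing their pairwise intersections---at the cost of the small extra care you flag about $\gamma^*\parallel\delta^*$ and the location of $z$. The paper's line-by-line bookkeeping is longer but makes every degeneracy visible. Both reach the same contradiction and neither yields more than the lemma states.
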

\begin{proof}
  The task is to show that case~(3) in Lemma~\ref{l:3} cannot
  occur. We argue by contradiction, and assume that $a_1$ intersects
  both $a_2$ and $a_3$ and that every line in $\Li_2 \cup \Li_3 \cup
  \Li_4$ is contained in $\Pi_2 = <a_1,a_2>$ or in $\Pi_3 =
  <a_1,a_3>$.

  Wlog we can assume that $\Pi_2$ contains exactly one line of $\Li_2$
  and one line of $\Li_3$. Indeed, each line $a_i$ must intersect
  lines from each other color class in at least two points. Thus, each
  of $\Pi_1$ and $\Pi_2$ must contain at least a line from each color
  class. Let $b_1$ and $c_1$ denote the lines of $\Li_2$ and $\Li_3$
  contained in $\Pi_2$.

  The line $a_2$ must intersect lines from each of $\Li_2$ and $\Li_3$
  in two distinct points. The only point in which $a_2$ can meet lines
  from $\Pi_3$ is $a_1 \cap a_2$. Thus, two lines $b_2 \in \Li_2$ and
  $c_2 \in \Li_3$, both contained in $\Pi_3$, pass through $a_1 \cap
  a_2$, and neither $b_1$ nor $c_1$ goes through that point.

  The line $b_1$ must intersect lines from $\Li_3$ in two distinct
  points, so one of them has to be $a_1 \cap b_1$ and the other $b_1
  \cap c_1$. In particular, $b_1$ and $c_1$ do not meet on $a_1$. 

  Now, the remaining lines $b_3 \in \Li_2$ and $c_3 \in \Li_3$ meet,
  respectively, $c_1$ and $b_1$ on $a_1$. Again, this follows from the
  fact that $b_1$ (resp. $c_1$) must intersect lines from $\Li_3$
  (resp. $\Li_2$) in two distinct points.

  We now have our contradiction: on the line $a_1$, every intersection
  with a line of $\Li_2$ is also on a line of $\Li_3$. The concurrence
  of $a_1$ with lines of $\Li_2$ and $\Li_4$ is therefore colorful.
\end{proof}

From now on, let us assume that $|\Li_1| = |\Li_2| = |\Li_3| = |\Li_4|
= 3$. Lemmas~\ref{l:3} and~\ref{l:3333} force the incidence structure
to be quite regular:

\begin{lemma}\label{l:two-inc}
  Let $\Li_i$ and $\Li_j$ be two color classes. Every line of $\Li_i$
  intersects exactly two lines of $\Li_j$. For any two lines of
  $\Li_j$, there is exactly one line of $\Li_i$ that intersects them
  both.
\end{lemma}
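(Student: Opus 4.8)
The plan is to show the lemma follows by a counting argument from the structure already established in Lemmas~\ref{l:3} and~\ref{l:3333}, combined with observation~(1) at the start of Section~\ref{s:class3}. First I would fix two color classes $\Li_i$ and $\Li_j$ and recall that, by Lemma~\ref{l:3} (whose final sentence applies once we know $\Li_i$ has size three), every line of $\Li_j$ intersects the lines of $\Li_i$ in \emph{exactly two} points. Since a line of $\Li_j$ cannot meet two lines of $\Li_i$ at the same point (that point would then lie on two lines of $\Li_i$ and on the line of $\Li_j$, and by $3$-consistency applied to the remaining colors it would extend to a colorful incidence, contradiction), each line of $\Li_j$ meets exactly two \emph{distinct} lines of $\Li_i$. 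Thus each of the three lines of $\Li_j$ contributes exactly two incidences with $\Li_i$, giving $6$ incidences total between $\Li_i$ and $\Li_j$.

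By symmetry in $i$ and $j$, the same argument shows each line of $\Li_i$ meets exactly two distinct lines of $\Li_j$, which both confirms the count of $6$ incidences and establishes the first assertion of the lemma in both directions. For the second assertion, I would count differently: the number of pairs $(\ell, \{m,m'\})$ where $\ell \in \Li_i$ meets both lines $m, m'$ of a fixed pair from $\Li_j$. Each line $\ell \in \Li_i$ meets exactly two lines of $\Li_j$, so it accounts for exactly one such pair $\{m,m'\}$; summing over the three lines of $\Li_i$ gives three incidences-of-pairs, distributed among the $\binom{3}{2}=3$ pairs of lines in $\Li_j$. To conclude that each pair is covered \emph{exactly} once, I need to rule out that two lines of $\Li_i$ pick out the same pair of $\Li_j$; this is where the argument is slightly more delicate.

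The main obstacle is precisely this last uniqueness step: showing that no two lines of $\Li_i$ intersect the same two lines of $\Li_j$. I would handle it by distinguishing the cases of Lemma~\ref{l:3333}. If $\Li_i$ is concurrent at a point $z$ not lying on any line of $\Li_j$, and two lines $\ell,\ell' \in \Li_i$ both met the same pair $m,m'\in\Li_j$, then $z$, together with the two distinct points of $\{m,m'\}$, would force $m$ and $m'$ to be coplanar with $z$; pushing this further, the two lines $m,m'$ and the point $z$ would constrain the third line of $\Li_j$ and produce either a colorful incidence or coplanarity, contradicting our standing hypotheses. If instead $\Li_i$ is pairwise skew, then observation~(3) shows that through any point off $m\cup m'$ there is a \emph{unique} transversal to the skew pair $m,m'$; since $\ell$ and $\ell'$ are distinct skew lines, they cannot both be this transversal unless they coincide, a contradiction. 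Since three incidences-of-pairs are spread over three pairs with none repeated, each pair of lines of $\Li_j$ is met by exactly one line of $\Li_i$, completing the proof.
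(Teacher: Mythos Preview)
Your overall counting framework is sound, but two of the justifications contain genuine errors.

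\textbf{First gap.} In the first paragraph you claim that a line of $\Li_j$ cannot meet two lines of $\Li_i$ at the same point because ``by $3$-consistency applied to the remaining colors it would extend to a colorful incidence.'' This does not follow: $3$-consistency only guarantees that each line lies on \emph{some} $S$-incidence for each triple $S$, not that such an incidence occurs at any prescribed point. The correct reason is structural: by Lemma~\ref{l:3333}, $\Li_i$ is either pairwise skew (so no two of its lines meet at all) or concurrent at a single point $c$; in the latter case, a line of $\Li_j$ through $c$ would meet the lines of $\Li_i$ only at $c$ (any second intersection with a line through $c$ would force equality), contradicting the ``exactly two points'' conclusion of Lemma~\ref{l:3}. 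This is precisely how the paper argues: at each of the two intersection points, exactly one line of $\Li_j$ passes.

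\textbf{Second gap.} Your pairwise-skew case for the uniqueness step is wrong on two counts. Observation~(3) gives a unique transversal to a skew pair \emph{through a fixed point}; it does not say there is only one transversal overall, and indeed two skew lines have infinitely many common transversals. Moreover, you are assuming $\Li_i$ is pairwise skew, but you apply the observation to the pair $m,m'\in\Li_j$, whose skewness you have not established. So this branch proves nothing. (Your concurrent case is also shaky: the contradiction actually comes from the third line of $\Li_i$, not of $\Li_j$.)

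Once the first part is correctly established in both directions, the case analysis is unnecessary. The paper simply observes that the bipartite intersection graph on $\Li_i\cup\Li_j$ is $2$-regular with three vertices on each side; a $2$-regular bipartite graph on $3+3$ vertices is a $6$-cycle, and the second assertion follows immediately. Equivalently, in your counting language: if $\ell,\ell'\in\Li_i$ both met the same pair $\{m,m'\}\subset\Li_j$, then since $m$ and $m'$ each meet exactly two lines of $\Li_i$, neither meets the third line $\ell''\in\Li_i$, leaving $\ell''$ with at most one neighbour in $\Li_j$ --- contradicting the first part. This uniform argument replaces your case split entirely.
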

\begin{proof}
  Let $\ell$ be a line of $\Li_i$. We know that $\ell$ intersects the
  lines of $\Li_j$ in exactly two points. Whether $\Li_j$ is pairwise
  skew or concurrent and non-coplanar, exactly one line of $\Li_j$
  must pass through each of these points.

  Now, consider the graph on $\Li_i \cup \Li_j$ where there is an edge
  between any two lines of different colors that intersect. This graph
  is bipartite, has three vertices per class, and every vertex has
  degree two (by the first statement). It must be a $6$-cycle, and the
  second statement follows.
\end{proof}

\begin{corollary}
  At most one class consists of concurrent lines.
\end{corollary}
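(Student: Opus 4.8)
The plan is to argue by contradiction. Suppose two color classes, say $\Li_1$ and $\Li_2$, both consist of concurrent lines; by Lemma~\ref{l:3333} each is concurrent \emph{and} non-coplanar. Write $p$ for the common point of $\Li_1 = \{a_1,a_2,a_3\}$ and $q$ for that of $\Li_2 = \{b_1,b_2,b_3\}$. I will show that the geometry forces $q=p$ while simultaneously forbidding any line of $\Li_2$ from passing through $p$, which is the contradiction.

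First I would record the key observation that no line of $\Li_2$ passes through $p$: since $a_1,a_2,a_3$ all pass through $p$, a line through $p$ would meet all three of them there, contradicting the statement of Lemma~\ref{l:two-inc} that every line of $\Li_2$ meets exactly two lines of $\Li_1$. (Symmetrically, no line of $\Li_1$ passes through $q$, though I will only need the first half.)

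Next I would pin down the incidence pattern using Lemma~\ref{l:two-inc}: the map sending each pair of lines of $\Li_1$ to the unique line of $\Li_2$ meeting both is a bijection onto $\Li_2$, so after relabeling $b_1$ meets $\{a_1,a_2\}$, $b_2$ meets $\{a_2,a_3\}$ and $b_3$ meets $\{a_3,a_1\}$. The two intersection points of $b_1$ with $a_1$ and $a_2$ are distinct --- if they coincided, the common point would lie on $a_1\cap a_2=\{p\}$, contradicting the observation above --- and both lie in the plane $<a_1,a_2>$, so $b_1\subset <a_1,a_2>$; likewise $b_2\subset <a_2,a_3>$ and $b_3\subset <a_3,a_1>$.

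Finally, since $b_1,b_2,b_3$ all pass through $q$, the point $q$ lies in the intersection of the three planes $<a_1,a_2>$, $<a_2,a_3>$ and $<a_3,a_1>$. All three contain $p$ and are pairwise distinct because $a_1,a_2,a_3$ are not coplanar; their triple intersection is exactly $\{p\}$, since $<a_1,a_2>\cap<a_2,a_3>=a_2$ and $a_2$ meets $<a_3,a_1>$ only at $p$ (otherwise $a_2\subset<a_3,a_1>$ would make the class coplanar). Hence $q=p$, and then $b_1$ passes through $p$, contradicting the first observation. I expect the two places needing the most care to be the distinctness of the intersection points and the computation that the three planes meet only at $p$; both reduce cleanly to the non-coplanarity of $\Li_1$ supplied by Lemma~\ref{l:3333}.
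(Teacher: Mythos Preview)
Your proof is correct and follows essentially the same approach as the paper: show that each line of $\Li_2$ lies in one of the three planes $<a_i,a_j>$, deduce that the center of $\Li_2$ lies in all three planes and hence coincides with the center $p$ of $\Li_1$, and derive a contradiction. The paper's version is more compressed (it phrases the final contradiction as ``any line of $\Li_2$ can only intersect the lines of $\Li_1$ in a single point''), but the argument is the same; your write-up simply spells out more carefully why the intersection points are distinct and why the triple intersection of planes is exactly $\{p\}$.
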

\begin{proof}
  Assume that $\Li_1$ and $\Li_2$ both consist of concurrent
  lines. For any $a_i, a_j \in \Li_1$, there is a line of $\Li_2$ that
  is coplanar with $a_i$ and $a_j$ (since it intersects them in two
  points). The center of concurrence of $\Li_2$ therefore lies on the
  intersection of the three planes spanned by pairs of lines of
  $\Li_1$. That intersection is the center of concurrence of
  $\Li_1$. This implies that any line of $\Li_2$ can only intersect
  the lines of $\Li_1$ in a single point, a contradiction.
\end{proof}

\subsubsection{Characterization of the \texorpdfstring{$3$-incidences}{3-incidences}}

We now focus on the combinatorial incidence structure defined by
$\Li$. We represent a combinatorial configuration using a set of cubic
monomials in $a_1,a_2,a_3$, $b_1,b_2,b_3$, $c_1,c_2,c_3$,
$d_1,d_2,d_3$. Here each variable represents a line in $\mathcal L$,
and a monomial $xyz$ means that the lines $x,y,z$ are concurrent. For
convenience, we also write $A$ for $\Li_1$, $B$ for $\Li_2$, etc.

Let $\mathcal C$ be an incidence structure for $\Li$ (that is, a set
of monomials). If we fix a color, say $A$, we can construct a graph
$\mathcal C_A$ with colored nodes and labeled edges, defined as
follows. The nodes of $\mathcal C_A$ are incidence points of $\mathcal
C$ that involve lines in $A$. Finally, each edge is uniquely
associated with a line from $B, C, D$: from Lemma~\ref{l:two-inc}, we
see that each line from $B, C, D$ has exactly {\em two} incidences
with lines in $A$, so it can be used to define an edge joining two
vertices of $\mathcal C_A$. From properties of $\mathcal C$ (and using
again Lemma~\ref{l:two-inc}) we deduce the following:

  \begin{itemize}
  \item Every node of $\mathcal C_A$ has degree two.
  \item Any edge of $\mathcal C_A$ joins vertices on different lines
    of $A$, and has neighbors of different color.
  \item For $s \neq 1$ and any $i \neq j$, there there exists a unique
    edge of class $\Li_s$ joining vertices on $a_i$ and $a_j$.
  \end{itemize}

  \noindent
  From these observations, we see that edges in $\mathcal C_A$ form
  (one or several) cycles, and that any three consecutive edges have
  different colors.  In particular, the colors alternate cyclically,
  so every cycle has length a multiple of three. An elementary
  case-analysis (\emph{e.g.} distinguishing whether $\mathcal C_A$
  contains two consecutive edges that join $a_1$ and $a_2$) reveals
  that, up to relabeling, $\mathcal C_A$ must be one of two graphs:

  \begin{center}
    \includegraphics[page=6]{figures}
  \end{center}    

  \noindent
  We claim that the only possible incidence structures are isomorphic
  to:
  \[
  (I): \begin{array}{cccc}
    a_1b_2c_3 & a_1b_3d_2 & a_1c_2d_3 & b_1c_3d_2\\
    a_2b_3c_1 & a_2b_1d_3 & a_2c_3d_1 & b_2c_1d_3\\
    a_3b_1c_2 & a_3b_2d_1 & a_3c_1d_2 & b_3c_2d_1\\
  \end{array} \quad \hbox{and} \quad 
  (II): \begin{array}{cccc}
    a_1b_2c_3 & a_1b_3d_2 & a_1c_2d_3 & b_1c_1d_1\\
    a_2b_3c_1 & a_2b_1d_3 & a_2c_3d_1 & b_2c_2d_2\\
    a_3b_1c_2 & a_3b_2d_1 & a_3c_1d_2 & b_3c_3d_3.\\    
  \end{array}
  \]

  \noindent
  Each graph specifies all incidences that involve $A$, so we have to
  complete the list by those between $B,C$ and $D$, and look for
  isomorphisms. Remark that these incidences must occur outside $A$,
  to avoid creating a colorful incidence. Note the following
  dichotomy:

  \begin{itemize}
  \item If the lines of $A$ are concurrent and not coplanar, then two
    lines in $B \cup C \cup D$ that intersect different pairs of lines
    of $A$ cannot intersect outside of $A$.

  \item If the lines of $A$ are pairwise skew, then two lines in $B
    \cup C \cup D$ that intersect the same pair of lines of $A$ cannot
    intersect outside of $A$.
  \end{itemize}

  \noindent
  Consider the graph on the left. The incidences involving $A$ readily
  match $(I)$ and $(II)$.  If the lines of $A$ are concurrent and not
  coplanar, then we must have $b_1c_1d_1$, $b_2c_2d_2$, $b_3c_3d_3$,
  and therefore $(II)$. If the lines of $A$ are pairwise skew, then
  the only possibility avoiding a colorful incidence is to have
  $b_1c_3d_2$, $b_2c_1d_3$ and $b_3c_2d_1$, and
  therefore~$(I)$. Consider now the graph on the right. If the lines
  of $A$ are concurrent and not coplanar, then the $3$-consistency
  require that a line from $D$ passes through $b_3 \cap c_2$, which
  makes that incidence colorful since this point is already on
  $a_1$. Hence, the lines of $A$ can only be pairwise skew, and the
  only choice avoiding a colorful incidence is $b_1c_2d_3$,
  $b_2c_3d_1$, $b_3c_1d_2$; Applying the cyclic permutation $a \to d \to c \to b \to a$ gives $(II)$.

  \begin{remark}
    The set of incidences $(I)$ can be obtained from the monomials
    with positive sign in
    \[ \det \left[ \begin{array}{cccc} a_1 & b_1 & c_1 & d_1 \\ a_2 & b_2 & c_2 & d_2 \\ a_3 & b_3 & c_3 & d_3 \\ 1 & 1 & 1& 1\end{array} \right].\]
  \end{remark}

\subsubsection{Geometric realization}

\begin{lemma}
  The incidence structures $(I)$ and $(II)$ can be realized
  geometrically and their only non-planar geometric realizations are,
  respectively, a Reye-type configuration and a Desargues-type
  configuration.
\end{lemma}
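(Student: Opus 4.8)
The plan is to prove the two assertions separately (first for $(II)$, then for $(I)$), and for each to establish both halves: realizability by exhibiting one explicit model and checking that its $3$-incidences are exactly the prescribed monomials, and uniqueness by a reconstruction argument that uses the projective freedom to place one color class in a canonical position and then solves for the remaining lines. Throughout I use the dichotomy of Lemma~\ref{l:3333} together with the fact that at most one color class can be concurrent: in $(II)$ the class $A$ is concurrent and not coplanar while $B,C,D$ are pairwise skew, whereas in $(I)$ all four classes are pairwise skew. The latter follows because $(I)$ is the set of positive monomials of the displayed $4\times 4$ determinant, hence invariant under the even permutations of the colors; these act transitively on the four classes, so all four are geometrically of the same type, and as at most one can be concurrent they must all be skew.

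For $(II)$ I would first read off from the monomials that each of $b_1,c_1,d_1$ meets both $A$-lines $a_2,a_3$, hence lies in the plane $\Pi_1\eqdef\langle a_2,a_3\rangle$, where the three concur at $Q_1\eqdef b_1\cap c_1\cap d_1$; symmetrically $b_i,c_i,d_i\subset\Pi_i$ and meet at $Q_i$, with $\Pi_1\cap\Pi_2=a_3$, $\Pi_1\cap\Pi_3=a_2$, $\Pi_2\cap\Pi_3=a_1$, and all three planes passing through the center $O$ of $A$. The crux is then to prove that $Q_1,Q_2,Q_3$ are collinear: this is a Desargues-type incidence forced by the cross-incidences of the form $a_ib_jc_k$ in $(II)$, and it is the step I expect to require the most care. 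Once collinearity is known, the line $L\eqdef\langle Q_1,Q_2,Q_3\rangle$ and three planes $\Pi_4,\Pi_5,\Pi_6$ through $L$ (one carrying, for each index in $\{4,5,6\}$, the matched triple of a $B$-, a $C$- and a $D$-line) can be recovered, and one verifies that conditions (i) and (ii) defining a Desargues-type configuration hold, the only alternative being the coplanar case excluded by hypothesis. Realizability comes from the six-plane model itself: coloring the three lines $\Pi_i\cap\Pi_j$ with $i,j\le 3$ as $A$ (concurrent at $\Pi_1\cap\Pi_2\cap\Pi_3$, not coplanar) and the three perfect matchings of $K_{3,3}$ between $\{1,2,3\}$ and $\{4,5,6\}$ as $B,C,D$, the triple-plane points reproduce precisely the $12$ monomials of $(II)$, with the three $BCD$-incidences lying on $L$ and no point on four of the planes.

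For $(I)$, since all four classes are pairwise skew I would normalize $A=\{a_1,a_2,a_3\}$ to a fixed triple of pairwise skew lines; this is legitimate because $PGL_4$ acts transitively on such triples, which lie on a common quadric. Each line of $B\cup C\cup D$ is then constrained to meet a prescribed pair of the $a_i$, and the mixed incidences $b_ic_jd_k$ of the fourth column of $(I)$ impose the remaining equations. Solving this system shows that, up to the residual symmetry, there is a unique non-planar solution, which one identifies with a selection of $12$ of the $16$ lines supporting the edges and long diagonals of a cube. Realizability is checked directly on this cube model: the twelve edges four-colored into four skew triples, with the three $3$-incidences where differently-colored parallel lines meet placed at infinity, yielding exactly the incidences of $(I)$ and no colorful incidence.

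The main obstacle in both cases is the uniqueness half, and within it the underlying projective incidence theorem: for $(II)$ the collinearity of $Q_1,Q_2,Q_3$ (a spatial Desargues statement), and for $(I)$ the analogous family of forced incidences that pin down the cube. A secondary difficulty is bookkeeping the discrete branching, since at several points an incidence can be realized in more than one combinatorial way; one must show that every branch other than the named configuration collapses to a coplanar set of lines, so that the non-planarity hypothesis selects exactly the Reye-type, respectively Desargues-type, realization.
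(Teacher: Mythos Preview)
Your plan for $(II)$ is essentially the paper's argument, but you have the order reversed in a way that makes the step you flag as ``the crux'' look harder than it is. In the paper, the three extra planes are built \emph{first}: each of the triples $\{b_2,c_3,d_1\}$, $\{b_3,c_1,d_2\}$, $\{b_1,c_2,d_3\}$ is pairwise intersecting (the intersections sit on distinct $a_i$) but not concurrent, hence coplanar; each of these three planes then visibly contains all three points $Q_i=b_ic_id_i$ (because it contains one line through each $Q_i$), and since the planes are distinct (else all of $B$ would be coplanar, contradicting skewness), the three $Q_i$ must be collinear. So no separate ``Desargues-type incidence theorem'' is needed---the collinearity drops out of the plane construction.

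For $(I)$ there is a genuine error in your realizability model. The twelve edges of a cube, four-colored into skew triples, do \emph{not} realize $(I)$: in any such coloring each color class contains exactly one edge from each parallel class, so every point at infinity lies on four edges, one of each color, and is a colorful incidence. A Reye-type configuration selects twelve of the \emph{sixteen} lines (the edges and the four long diagonals) so that each of the twelve Reye points lies on exactly three of them; concretely one removes three edges (one per direction) together with one diagonal, chosen so that every vertex, every point at infinity, and the center each lose exactly one line. This is the model in Figure~\ref{f:3333}.

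Your symmetry argument that all four classes in $(I)$ are pairwise skew is also loosely stated. Invariance of the incidence structure under $A_4$ does not by itself force all color classes in a \emph{fixed} realization to be of the same geometric type. What does work is: if some class $X$ were concurrent, relabeling via $\sigma\in A_4$ with $\sigma(X)=A$ gives another realization of $(I)$ with $A$ concurrent, contradicting the dichotomy established just before the lemma (the ``left graph'' analysis shows that $A$ concurrent forces the $BCD$-incidences of $(II)$, not $(I)$).

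For the uniqueness of $(I)$, your normalization (fixing $A$ as a canonical skew triple and solving for the rest) is viable but leads to a nontrivial algebraic system. The paper's normalization is different and more efficient: it sends three of the twelve incidence points (namely $a_1b_2c_3$, $a_2b_1d_3$, $a_3c_1d_2$) to infinity along orthogonal directions; the eight incidence points that already determine all twelve lines are then forced to be the vertices of a parallelepiped, and the remaining twelfth incidence ($b_3c_2d_1$) is the elementary fact that the space diagonals of a parallelepiped concur. This avoids solving any system and identifies the cube structure directly.
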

\begin{proof}
  The fact that the two incidence structures are realizable can be
  seen directly from Figure~\ref{f:3333}. The second part of
  statement are essentially two ``geometric theorems'': it means the
  $12$ incidences of triples of lines from $(I)$ and $(II)$
  automatically guarantee that certain other collinearities or
  incidences always hold.

  Let us consider first the incidence structure $(II)$. Remark that
  the lines $b_1$ and $c_1$ intersect outside of a line of $A$ (since
  their intersection is on $d_1$) and intersect the same pair of lines
  of $A$, namely $\{a_2,a_3\}$. This implies that in any non-planar
  realization, the lines of $A$ must be concurrent and not
  coplanar. Moreover, in each of the triples $\{b_2,c_3,d_1\}$,
  $\{b_3,c_1,d_2\}$ and $\{b_1,c_2,d_3\}$, the lines intersect
  pairwise (we see that from their incidences) but are not concurrent
  (because any pairwise intersection occurs on a line of $A$); each of
  these triples is therefore coplanar. All three planes must contain
  the points of incidence of $b_1c_1d_1$, $b_2c_2d_2$ and
  $b_3c_3d_3$. As $A$ is already concurrent, all other color classes
  consist of pairwise skew lines, and these points are distinct. It
  follows that the three planes intersect in a line, and we have a
  Desargues-type configuration.

  Now let us consider $(I)$. We first observe that eight well-chosen
  incidence points are sufficient to determine all lines. For example,
  we may take the eight incidences to be
  \begin{equation}
    \label{eq:conf1}
    \begin{array}{l}
      a_{2} b_{3} c_{1} \\
      a_{3} b_{1} c_{2} \\
      a_{1} b_{3} d_{2} \\
      a_{3} b_{2} d_{1} \\
    \end{array}\qquad
    \begin{array}{ll}
      a_{1} c_{2} d_{3} \\
      a_{2} c_{3} d_{1} \\
      b_{1} c_{3} d_{2} \\
      b_{2} c_{1} d_{3} \\
    \end{array}
  \end{equation}
  Fixing these points arbitrarily in $\R^3$ yields a configuration of
  lines that satisfies $8$ out of the $12$ required incidences.  We
  now assume that three of the remaining four incidence points ({\em
    e.g.}, $a_{1} b_{2} c_{3}$, $a_{2} b_{1} d_{3}$, $a_{3} c_{1}
  d_{2}$) are at infinity, and correspond to orthogonal directions.
  We can assume this because any incidence properties of a geometric
  realization is preserved by projective transformations.  Enforcing
  the corresponding incidences (or collinearity of incidence points)
  forces the eight points in~\eqref{eq:conf1} to be the vertices of a
  parallelepiped.  This fixes a unique projective configuration, and
  there is no more freedom to impose the convergence of the remaining
  triple of lines ($b_{3} c_{2} d_{1}$).  On the other hand, this last
  incidence is always automatically satisfied: this geometric result
  is equivalent to the fact the diagonals of a parallelepiped meet at
  a single point. All of these alignments guarantee that the $12$
  incidence points are vertices in a Reye configuration.
\end{proof}

\begin{remark} The problem of determining whether or not a combinatorial
incidence structure admits a ``coordinatization'' can be addressed
algorithmically, by enforcing appropriate algebraic constraints on coordinates
using a computer algebra system (see, {\em e.g.},
~\cite{sturmfels1991computational}). We discovered the Reye realization
of the incidence structure $(I)$ using this method. This approach
also yields a computational proof that there is in fact a unique
projective realization.
\end{remark}

\subsection{Proof of Theorem~\ref{t:small-ii}}
\label{a:flat}

We now prove Theorem~\ref{t:small-ii}, which states that any
$3$-consistent colored set of lines $\Li = \Li_1 \cup \Li_2 \cup \Li_3
\cup \Li_4$ in $\R^3$ with no colorful incidence and concurrent colors
has at least $24$ lines or is contained in a $2$-plane. Let us start
with a decomposition lemma: 

\begin{lemma}\label{l:decomp}
 Let $\Li = \Li_1 \cup \Li_2 \cup \Li_3 \cup \Li_4$ be a
 $3$-consistent colored set of lines in $\R^3$ with no colorful
 incidence and concurrent color classes. If two color classes, each
 concurrent, are contained in two planes, then $\Li$ is contained in
 these planes and the subset of $\Li$ contained in each plane is
 $3$-consistent.
\end{lemma}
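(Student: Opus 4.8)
The plan is to prove Lemma~\ref{l:decomp} in two stages: first show that \emph{every} line of $\Li$ lies in one of the two given planes, and then verify that the restriction of $\Li$ to each plane remains $3$-consistent. Suppose $\Li_1$ and $\Li_2$ are the two concurrent color classes, with $\Li_1 \subset \Pi_1$ and $\Li_2 \subset \Pi_2$. Since $\Li_1$ is concurrent and contained in $\Pi_1$, its center of concurrence $O_1$ lies in $\Pi_1$ (similarly $O_2 \in \Pi_2$). The key structural fact I would extract first is that $\Pi_1 \neq \Pi_2$: if the two planes coincided, then two color classes would be coplanar, and by observation~(2) of Section~\ref{s:class3} the entire configuration $\Li$ would be planar, which is exactly the (degenerate) conclusion we are allowed to reach. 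So assume $\Pi_1 \neq \Pi_2$ and let $\lambda = \Pi_1 \cap \Pi_2$ be their common line.

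The heart of the argument is to locate where an arbitrary line $\ell$ of $\Li_3$ (or $\Li_4$) must live. By observation~(1), $\ell$ must intersect the lines of $\Li_1$ in at least two distinct points, and all these intersection points lie in $\Pi_1$; similarly $\ell$ meets $\Pi_2$ in at least two points coming from incidences with $\Li_2$. A line meeting a plane in two distinct points is contained in that plane, so $\ell \subset \Pi_1$ \emph{and} $\ell \subset \Pi_2$ would force $\ell \subset \lambda$ — unless one of these ``two points'' degenerates. The careful point, and what I expect to be the main obstacle, is ruling out the degenerate configuration where $\ell$ passes through a center of concurrence: if $\ell$ passes through $O_1$, then all of its incidences with $\Li_1$ collapse to the single point $O_1$, and the ``two distinct points'' guarantee fails. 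I would handle this by observing that if $\ell$ met $\Li_1$ only at $O_1$, then since $\ell$ is $3$-consistent it would form an incidence there with lines of $\Li_1$ and one further color, and I must check this does not create a colorful incidence or contradict the hypotheses; the clean resolution is that $\ell$ genuinely has two distinct intersection points with $\Li_1$ whenever $|\Li_1| \ge 2$ and $\Li_1$ is not degenerate, forcing $\ell \subset \Pi_1$. Running the same argument with $\Pi_2$ gives $\ell \subset \Pi_2$, hence $\ell \subset \lambda$; but a single line $\lambda$ cannot host the required incidence structure for a whole color class without collapsing colors, so in fact each line of $\Li_3 \cup \Li_4$ must lie in $\Pi_1$ or in $\Pi_2$ (with the two-plane alternative replacing the impossible $\ell \subset \lambda$ case). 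This shows $\Li \subset \Pi_1 \cup \Pi_2$.

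For the second stage I would fix one plane, say $\Pi_1$, set $\Li^{(1)} = \Li \cap \Pi_1$, and verify $3$-consistency of $\Li^{(1)}$ directly from that of $\Li$. Take any line $m \in \Li^{(1)}$ and any pair of other colors present in $\Pi_1$; by the $3$-consistency of $\Li$, the line $m$ lies in a $3$-incidence $p$ with lines of those two colors. The point $p$ lies on $m \subset \Pi_1$, so $p \in \Pi_1$, and the two partner lines pass through $p \in \Pi_1$. I then argue that these partner lines must themselves lie in $\Pi_1$: each such partner line already meets $\Pi_1$ at $p$, and by the same two-points-in-a-plane reasoning as above (it has a second, distinct incidence in $\Pi_1$ forced by its own consistency with $\Li_1$ or $\Li_2$), it is contained in $\Pi_1$. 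Hence the witnessing $3$-incidence survives inside $\Li^{(1)}$, proving $\Li^{(1)}$ is $3$-consistent; symmetrically for $\Pi_2$. The one subtlety to flag is that the no-colorful-incidence hypothesis must be inherited, which is automatic since $\Li^{(1)} \subseteq \Li$ and any colorful incidence in a subset is a colorful incidence in $\Li$. I expect the degenerate through-the-center cases in the first stage to be the only genuinely delicate part; the rest is bookkeeping with observations~(1)--(3).
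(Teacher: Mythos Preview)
You have misread the hypothesis. The phrase ``two color classes $\ldots$ are contained in two planes'' means $\Li_1 \cup \Li_2 \subset \Pi_1 \cup \Pi_2$, with the lines of \emph{each} of $\Li_1,\Li_2$ distributed between \emph{both} planes; it does \emph{not} mean $\Li_1 \subset \Pi_1$ and $\Li_2 \subset \Pi_2$. (This is how the lemma is used in the proof of Theorem~\ref{t:small-ii}: the two planes are spanned by $c_2$ together with pairs of lines from $\Li_1$, so each plane contains lines of both colors.) Under your reading, $\Li_1$ would be coplanar, and then observation~(2) immediately gives that the whole configuration is planar; the lemma would be a triviality and there would be nothing to prove.

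This misreading is what causes the collapse you noticed in your Stage~1: with $\Li_1\subset\Pi_1$ and $\Li_2\subset\Pi_2$, your two-points argument genuinely forces every $\ell\in\Li_3\cup\Li_4$ into $\Pi_1\cap\Pi_2=\lambda$, and there is no ``two-plane alternative'' to fall back on. The paper's actual argument is different: once one knows each $\Pi_i$ contains lines of \emph{both} $\Li_1$ and $\Li_2$, a line $\ell\notin\Pi_1\cup\Pi_2$ meets the two planes in at most two points total, and those two points are then forced to carry both a line of $\Li_1$ and a line of $\Li_2$; any $3$-incidence of $\ell$ with $\Li_1$ is then automatically colorful. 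For the second stage, the paper does not re-run a two-points-in-a-plane argument on partner lines; instead it identifies $\Pi_1\cap\Pi_2$ as the line $c_1c_2$ through the two centers and shows no $3$-incidence can lie on it (a $3$-incidence there would force a line of $\Li_1$ to equal $c_1c_2$ and pass through $c_2$, creating a colorful incidence). Hence every witnessing $3$-incidence already lives entirely inside one plane.
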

\begin{proof}
  Consider two color classes $\Li_1$ and $\Li_2$ contained in two
  planes $\Pi_1$ and $\Pi_2$. If all lines of $\Li_1$ are coplanar,
  then any line of $\Li \setminus \Li_1$ must intersect that plane
  into at least two distinct points, and the configuration is fully
  planar. Thus, each of $\Pi_1$ and $\Pi_2$ contains some lines of
  $\Li_1$ and $\Li_2$.

  Every line $\ell \in \Li\setminus(\Li_1 \cup \Li_2)$ must be
  contained in $\Pi_1$ or $\Pi_2$. To see this, assume otherwise. Then
  $\ell$ intersects the two planes in at most two points. These are
  the only points in which $\ell$ can intersect the lines of $\Li_1
  \cup \Li_2$. It must then be that both of these points are on a line
  of $\Li_1$ and on a line of $\Li_2$. Thus, any concurrence of $\ell$
  and a line of $\Li_1$ lies on a line of $\Li_2$; we cannot have
  $3$-consistency without a colorful incidence.

 Now, observe that $\Pi_1 \cap \Pi_2 = c_1c_2$. This implies that no
 $3$-incidence can occur on that line, as this would force a line from
 $\Li_1$ to contain $c_2$ (or vice versa) and $c_2$ would be a
 $4$-incidence. Thus, each of the subsets of $\Li$ contained in
 $\Pi_1$ and $\Pi_2$ is already $3$-consistent.
\end{proof}

We also use the following strengthening of Lemma~\ref{p:verysmall}:

\begin{lemma}\label{l:class2}
  Let $\Li = \Li_1 \cup \Li_2 \cup \Li_3 \cup \Li_4$ be a
  $3$-consistent colored set of lines in $\R^d$ with $|\Li_1| = 2$. If
  every color class is concurrent and $\Li$ contains no colorful
  incidence, then $|\Li_2| = |\Li_3| = |\Li_4|$ is even and at least
  $4$.
\end{lemma}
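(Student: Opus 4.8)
The plan is to argue directly in $\R^d$, using only the incidence relations and the concurrency of the color classes, since both are intrinsic and nothing is gained by projecting to a plane. Write $\Li_1=\{a_1,a_2\}$ and let $o=a_1\cap a_2$ be the point where the two lines of $\Li_1$ meet. Because each $\Li_i$ is concurrent, any point other than the center $o_i$ of $\Li_i$ lies on at most one line of $\Li_i$; this is what will let incidences define functions. The first step is a normalization: \emph{no line of $\Li_2\cup\Li_3\cup\Li_4$ passes through $o$, and no center $o_i$ with $i\in\{2,3,4\}$ lies on $a_1\cup a_2$}. A line through $o$ meets $a_1\cup a_2$ only at $o$, so all of its forced $\{1,i,j\}$-incidences collapse to $o$, making $o$ colorful. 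For the centers, suppose $o_2\in a_1$; then every line of $\Li_2$ meets $a_1$ only at $o_2$, and tracing the $\{1,2,3\}$- and $\{1,2,4\}$-consistency of the lines of $\Li_2$ forces either a color-$3$ and a color-$4$ line through $o_2$ (so $o_2$ is colorful), or all of $\Li_2$ and (say) all of $\Li_3$ to become concurrent at $o_2$, after which no line of $\Li_4$ can realize a $\{2,3,4\}$-incidence off $a_1$ without creating a colorful point. I expect this exclusion of degenerate center positions to be the main obstacle of the proof.

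With the normalization in hand, the core of the argument is a matching structure on $a_1\cup a_2$. Fix a color $i\in\{2,3,4\}$, a line $\ell\in\Li_i$, and write $\{j,k\}=\{2,3,4\}\setminus\{i\}$. By $3$-consistency $\ell$ lies on a $\{1,i,j\}$-incidence and on a $\{1,i,k\}$-incidence, each of which is a point of $a_1$ or $a_2$. These two incidences must lie on \emph{different} lines among $\{a_1,a_2\}$: otherwise they would both be the single point $\ell\cap a_1$ or $\ell\cap a_2$, which would then carry lines of all four colors. The same comparison shows that $\ell$ has a \emph{unique} $\{1,i,j\}$-incidence (a second one would force the $\{1,i,k\}$-incidence to coincide with it and create a colorful point), and by the normalization this incidence point is not a center, hence carries exactly one line of color $i$ and one of color $j$. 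Reading the $\{1,2,3\}$-incidences on $a_1\cup a_2$ both ways therefore yields a bijection, i.e.\ a perfect matching $M_{23}$ between $\Li_2$ and $\Li_3$; likewise $M_{24}$ between $\Li_2,\Li_4$ and $M_{34}$ between $\Li_3,\Li_4$. In particular $|\Li_2|=|\Li_3|=|\Li_4|=:n$.

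The last step assembles the three matchings into the graph $G$ on vertex set $\Li_2\cup\Li_3\cup\Li_4$ with edge set $M_{23}\cup M_{24}\cup M_{34}$. Every line of $\Li_2$ lies in exactly one $M_{23}$-edge and one $M_{24}$-edge, and similarly for the other colors, so $G$ is $2$-regular and splits into disjoint cycles. At a $\Li_2$-vertex the two incident edges lead to $\Li_3$ and to $\Li_4$, so along a cycle the colors recur with period three and every cycle has length divisible by $3$. On the other hand the two incidences of any one line sit on different members of $\{a_1,a_2\}$, so the host line of successive edges alternates between $a_1$ and $a_2$ along the cycle, forcing even length. Hence every cycle has length divisible by $6$ and contains an even number of lines of each color, so $n$ is even. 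Since $3$-consistency of $a_1$ already forces $\Li_2,\Li_3\neq\emptyset$ we have $n\ge1$, hence $n\ge2$; and $n=2$ is impossible, for then all four classes would have size $2$ and Lemma~\ref{p:verysmall} would produce a colorful incidence. Therefore $n$ is even and at least $4$.
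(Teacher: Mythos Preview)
Your argument is correct and shares its core with the paper's: both build a $2$-regular graph on $\Li_2\cup\Li_3\cup\Li_4$ whose edges record the $\{1,i,j\}$-incidences, observe that along each cycle the colors are $3$-periodic and the host line among $\{a_1,a_2\}$ is $2$-periodic, deduce that cycle lengths are multiples of~$6$ (hence each color class has even size), and finish with Lemma~\ref{p:verysmall}. The paper first passes to the dual planar point configuration and then goes further, computing an explicit projective self-map of one of the dual lines to show that every cycle has length \emph{exactly}~$6$ (incidentally forcing $o_2,o_3,o_4$ to be collinear); that extra computation is not needed for the lemma as stated, so your direct argument in~$\R^d$ is a modest streamlining of the same approach.
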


\noindent
Before we prove Lemma~\ref{l:class2}, let us see how it implies
Theorem~\ref{t:small-ii}.

\begin{proof}[Proof of Theorem~\ref{t:small-ii}]
  First, observe that the $3$-consistency and absence of colorful
  incidence implies the following fact: every line intersects every
  other color in at least two distinct points. Since $|\Li|<24$ there
  exists a color class of size at most $5$, say $\Li_1$.

  Let $c_2$ denote the center of concurrence of $\Li_2$. We examine
  the planes spanned by the lines of $\Li_1$ with $c_2$. Any such
  plane must contain two lines of $\Li_1$. Indeed, any line $\ell_1
  \in \Li_1$ intersects at least one line in $\Li_2$, which in turn
  intersects at least another line $\ell_1'$ in $\Li_1$. The planes
  spanned by the lines of $\Li_1$ with $c_2$ thus coincide by pairs,
  so there are at most $2$ such planes. Moreover, every line of
  $\Li_2$ intersects some line of $\Li_1$, and is therefore contained
  in one of these planes. By Lemma~\ref{l:decomp}, $\Li$ decomposes
  into two disjoint subsets, each of which is also $3$-consistent,
  without $4$-incidence, and with concurrent color classes. Now,
  observe that each subset must have at least $12$ lines: this is
  immediate for a subset with all color classes of size $3$ or more,
  and follows from Lemma~\ref{l:class2} for subsets with a color class
  of size~$2$.
\end{proof}

\subsection{Proof of Lemma~\ref{l:class2}}
\label{s:class2}

To analyze the situations where a color class has size two, it is
convenient to switch to a dual setting. Given a colored point set $P =
P_1 \cup P_2 \cup P_3 \cup P_4$, let a \emph{colorful alignment} in
$P$ is a line containing points of $P$ of all colors. We say that $P$
is \emph{$3$-consistent} if for any point $x \in P$ and any choice of
$2$ other colors, there is a line through $x$ that contains a point of
each of these colors. Now, a $4$-colored set of lines $\Li$ in $\R^d$
is $3$-consistent if and only if projecting $\Li$ to a generic
$2$-plane and taking the dual of these projected lines yields a
$3$-consistent colored point set. Similarly, colorful incidences are
mapped, by generic projection and duality, to colorful alignment.

\begin{proof}[Proof of Lemma~\ref{l:class2}]
  We prove the statement in its dual formulation. So, let $P = P_1
  \cup P_2 \cup P_3 \cup P_4$ be a colored planar point set that is
  $3$-consistent, has no colorful incidence and where each color class
  $P_i$ is contained in a line, denoted $\ell_i$. The $3$-consistency
  and lack of colorful alignment in $P$ implies that the $P_i$ are
  pairwise disjoint and that for any $i \neq j$ we have $\ell_i \cap
  P_j = \emptyset$.

  Let us write $P_1 = \{v,h\}$ and put $P' = P_2 \cup P_3 \cup
  P_4$. We define a graph $G$ with vertex set $P'$ and an edge in $G$
  between $p$ and $q$ if they have different colors and are aligned
  with a point of $P_1$. We orient the graph $G$ by orienting the
  edges from $P_2$ to $P_3$, from $P_3$ to $P_4$, and from $P_4$ to
  $P_2$. Every vertex has exactly two edges, one incoming and one
  outgoing; this follows from the $3$-consistency and the fact that
  $\ell_i \cap P_j = \emptyset$ for $i \neq j$. Thus, $G$ is a
  disjoint union of cycles.

  To every edge $(p,q)$ in $G$ we associate two parameters: the pair
  $(i,j)$ of color classes of $p$ and $q$, respectively, and the
  vertex of $P_1$ that is collinear with $p$ and $q$. For $(i,j) \in
  \{(2,3), (3,4), (4,2)\}$ and $c \in P_1$ we let $p_{i,j}^c$ denote
  the projection from $\ell_i$ to $\ell_j$ with center~$c$. Let $T$
  denote the set of vertices in $P_3$ that are aligned with their
  successors and the point $h$ from $P_1$.  Along a cycle, $(i,j)$ is
  $3$-periodic, by the choice of orientation of the edges, and $x$ is
  $2$-periodic, by $3$-consistency and lack of $4$-alignment. Thus,
  every cycle in $G$ has length some multiple of $6$ and has a vertex
  in $T$. Moreover, this vertex must be a fixed point of some power of
  the function $f$ defined by:
  \[ f:\left\{\begin{array}{rcl} \ell_3 & \to & \ell_3\\
  (x,y) & \mapsto & p_{2,3}^v \circ p_{4,2}^h \circ p_{3,4}^v \circ
  p_{2,3}^h \circ p_{4,2}^v \circ p_{3,4}^h(x,y)
  \end{array}\right.\]

  To compute $f$ it is convenient to apply a projective transform that
  maps $v$ and $h$ to infinity, to respectively the vertical and
  horizontal directions. This preserves incidences in $P$, being
  understood that a line contains $v$ (resp. $h$) if and only if it is
  vertical (resp. horizontal). In particular, none of $\ell_2$,
  $\ell_3$ or $\ell_4$ is horizontal or vertical, so we can
  parametrize $\ell_i$ by $(x, \alpha_i x + \beta_i)$. We then have:
  \[ p_{i,j}^v(x,y) = (x,\alpha_j x + \beta_j) \quad \hbox{and} \quad  p_{i,j}^h(x,y) = \pth{\frac{y-\beta_j}{\alpha_j},y}.\]
  Let us put the origin of our frame at $\ell_2 \cap \ell_3$ so that
  $\beta_2 = \beta_3 = 0$. An elementary computation then yield that
  the point $f\pth{x,\alpha_3 x+\beta_3}$ has $x$-coordinate $x+c$
  where
  \[ c = \frac{\alpha_3-\alpha_2}{\alpha_3\alpha_2}\beta_4.\]
  If $c \neq 0$ then no power of $f$ has a fixed point, so we must
  have $c=0$. Since $\alpha_2 \neq \alpha_3$, we must have $\beta_4=0$
  and $\ell_4$ goes through $\ell_2 \cap \ell_3$.

  Altogether, we proved that the lines supporting $P_2$, $P_3$ and
  $P_4$ are concurrent, and that $P'$ decomposes into some number $r$
  of $6$-cycles $\gamma_1, \ldots, \gamma_r$ of the form
  \[ \begin{aligned}
    a \in \ell_3 \to b = p_{3,4}^h(a) \to c =  p_{4,2}^v(b) & \to d = p_{2,3}^h(c)\\
    & \to e= p_{3,4}^v(d) \to f = p_{4,2}^h(e) \to a = p_{2,3}^v(f).
    \end{aligned}\]
  In particular, $|P_2|=|P_3|=|P_4| = 2r$. Lemma~\ref{p:verysmall}
  already established that we must have $r>1$.
\end{proof}

Interestingly, for every $r>1$, there exists a $3$-consistent
configuration with $|P_1|=2$ and $|P_2|=|P_3|=|P_4| = 2r$ that has no
colorful alignment.

\bibliography{ref}

\end{document}